\documentclass[]{article}

\usepackage{authblk}

\usepackage[dvipsnames]{xcolor}
\usepackage[T1]{fontenc}
\usepackage[utf8]{inputenc}
\usepackage{enumerate}
\usepackage{dsfont}
\usepackage{amsfonts}
\usepackage{nicefrac}
\usepackage{amssymb}
\usepackage{amsthm}
\usepackage{mathtools}
\usepackage{tikz}
\usepackage{tabularx}
\usepackage{paralist}
\usepackage{booktabs}
\usepackage{hyperref}

\usepackage[square,numbers]{natbib}
\usepackage[capitalize]{cleveref}
\usepackage[backgroundcolor=gray!10,textsize=footnotesize]{todonotes}
\usepackage{xspace}
\usepackage{thm-restate}
\usepackage{orcidlink}

\usepackage{etoolbox}

\newcommand{\appsymb}{$\star$}
\newcommand{\appref}[1]{{\hyperref[proof:#1]{\appsymb}}}

\newcommand{\appendixproof}[2]{%
  \gappto{\appendixProofs}
  {
    \subsection{Proof of \Cref{#1}}\label{proof:#1}
    #2
  }
}

\usetikzlibrary{decorations.pathreplacing}
\usetikzlibrary{math}
\usetikzlibrary{calc}
\usetikzlibrary{shapes}

\newcommand{\problemdef}[3]{
	\begin{center}
	\begin{minipage}{0.95\columnwidth}
		\noindent
		\textsc{#1}
		\vspace{5pt}\\
		\setlength{\tabcolsep}{3pt}
		\begin{tabularx}{\textwidth}{@{}lX@{}}
			\textbf{Input:}     & #2 \\
			\textbf{Question:}  & #3
		\end{tabularx}
	\end{minipage}
	\end{center}
}

\newcommand{\XP}{\textrm{XP}\xspace}

\newcommand{\FPT}{\textrm{FPT}\xspace}
\newcommand{\NP}{\textrm{NP}\xspace}

\newcommand{\G}{\mathcal{G}}
\newcommand{\N}{\mathbb{N}}
\newcommand{\Z}{\mathbb{Z}}

\newcommand{\SEG}{{\normalfont\textsc{Seg-Agony}}\xspace}
\newcommand{\USEG}{{\normalfont\textsc{Unweighted Seg-Agony}}\xspace}

\DeclareMathOperator{\OPT}{OPT}

\newtheorem{proposition}{Proposition}

\newtheorem{theorem}{Theorem}
\newtheorem{corollary}{Corollary}

\Crefname{theorem}{Theorem}{Theorems}
\crefname{theorem}{Thm.}{Thms.}
\Crefname{corollary}{Corollary}{Corollaries}
\crefname{corollary}{Cor.}{Cors.}

\title{Parameterized Complexity of Temporal Agony}

\author{Tom-Lukas Breitkopf\,\orcidlink{0009-0008-2875-1945}, Vincent Froese\,\orcidlink{0000-0002-8499-0130}, Anton Herrmann\,\orcidlink{0009-0008-8473-9043},\\ Pascal Kunz\thanks{Supported by the DFG Research Training Group 2434 ``Facets of Complexity''.}\ \ \orcidlink{0000-0002-0787-8428}}
\affil{Technische Universit\"at Berlin, Algorithmics and Computational Complexity, Berlin, Germany}
\affil{\texttt{\{t.breitkopf,vincent.froese,a.herrmann\}@tu-berlin.de}}

\begin{document}

\maketitle

\begin{abstract}
  Real-world networks are often organized in several layers forming a hierarchy which determines the interaction between the individual components. In order to discover such hierarchies in temporal networks, Tatti~[ECML PKDD 2018] introduced the \emph{temporal agony} problem \textsc{Seg-Agony}. Here, the goal is to assign each vertex a certain rank (from~1 to~$k$) such that arcs only point from lower ranks to higher ranks. Backward arcs are penalized depending on the difference between the corresponding ranks. Since arcs may change over time, each vertex is allowed to change its rank $\ell\ge 1$ times in order to minimize the overall penalty~$\alpha$ (called temporal agony).

  We study the parameterized complexity of \SEG with a special focus on the number~$k$ of possible ranks for which we identify the precise complexity border. We show that the problem is polynomial-time solvable for~$k=2$, NP-hard for $k=3$ and~$\ell=1$ but polynomial-time solvable for constant $\alpha$, and NP-hard for~$k=4$ and~$\ell=1$ even for~$\alpha=0$. We further show a polynomial-time algorithm for a constant number~$n$ of vertices and fixed-parameter tractability for the combined parameter~$n+\ell$.

\end{abstract}

\section{Introduction}

Many real-world systems naturally exhibit a hierarchical structure between the considered objects or individuals.
For example, hierarchies appear in (wireless) computer networks, social networks, or also in biological networks (see references in \cite{Tatti17,Tatti18}). 
Detecting such hierarchies in directed networks is a key task for network analysis in various domains since it can explain the interaction between individuals and help to understand the observed outcome.

\citet{GSLMI11} introduced the following concept to discover hierarchies in directed networks:
For a given (static) digraph $G=(V,A,w)$ with positive arc weights~$w\colon A \to \N^+$, the goal is to find a \emph{rank assignment} $r\colon V\to [k]$ for some~$k\in\N$ such that the arcs obey these vertex ranks as best as possible, that is, all arcs should point from a lower rank to a higher rank.
Formally, given a \emph{penalty function} $p\colon \Z\to\N$ with~$p(x)=0$ for~$x<0$, we seek a rank assignment~$r$ that minimizes
\[q(r,G,p)\coloneqq \sum_{(u,v)\in A}w(u,v)\cdot p(r(u)-r(v)).\]
If the penalty~$p$ is the function $p_l(x)\coloneqq\max(0,x+1)$, then the cost~$q$ above is called (weighted) \emph{agony}~\cite{GSLMI11}.
Note that for~$k=|V|$ and ``constant'' penalty~$p_c(x)=\mathds{1}_{x\ge 0}$ the problem of minimizing~$q(r,G,p_c)$ for unweighted digraphs is equivalent to the well-known \NP-hard \textsc{Feedback Arc Set} problem.

Since many real-world networks are in fact dynamically changing over time,
\citet{Tatti18} extended this approach to temporal (weighted) directed networks
where the arcs may appear/disappear over discrete time steps and have changing weights.
We focus on one of his problem formulations where the vertex rank is allowed to change~$\ell\ge 1$ times over the~$\tau$ time steps.
To this end, a \emph{temporal rank assignment} $r\colon V\times [\tau]\to[k]$ is called a \emph{rank $\ell$-segmentation} if there exist rank functions~$r_0,\ldots,r_\ell\colon V\to[k]$ and a change point function~$c\colon V\times[\ell]\to[\tau]$ with $c(v,i+1)\ge c(v,i)$ for all~$v\in V$ and~$i\in[\ell-1]$ such that
\[
r(v,t)=
\begin{cases}r_0(v),& \text{ if } t<c(v,1),\\
r_1(v),& \text{ if } c(v,1)\le t < c(v,2),\\
\vdots\\
r_\ell(v),& \text{ if } c(v,\ell)\le t,
\end{cases}
\]
holds for all vertices~$v\in V$ and time steps~$t\in[\tau]$.
For a temporal weighted digraph $\G=(V,(A_t)_{t\in[\tau]},(w_t)_{t\in[\tau]})$ with temporal arc weights $w_t\colon A_t\to\N^+$, the goal is
then to find a rank $\ell$-segmentation~$r$ that minimizes the following cost (called \emph{temporal agony})
\[q(r,\G,p_l)\coloneqq \sum_{t\in[\tau]}\sum_{(u,v)\in A_t}w_t(u,v)\cdot p_l(r(u,t)-r(v,t)).\]

The decision version is formally defined as follows:
\problemdef{\SEG}
{A temporal weighted digraph~$\G=(V,(A_t)_{t\in[\tau]},(w_t)_{t\in[\tau]})$, two integers $k$, $\ell\in\N$, and~$\alpha\in\N$.}
{Is there a rank $\ell$-segmentation~$r$ with~$q(r,\G,p_l)\le \alpha$?}

Note that if no rank changes are allowed (that is, $\ell=0$), then the problem reduces to the static case (by taking the union of all arc sets over time and summing up the corresponding weights).

In this work, we initialize a more fine-grained complexity analysis of \SEG in terms of the parameterized complexity in order to delineate the exact border of computational tractability.
Our results reveal interesting connections to Boolean satisfiability problems with transitions from polynomial-time solvable cases to NP-hardness for constant parameter values (involving the parameters number~$k$ of ranks, the target agony~$\alpha$, the number~$\ell$ of change points, and the lifetime~$\tau$).

\paragraph*{Related Work.}
\citet{GSLMI11} showed that for a static unweighted digraph with~$n$ vertices and~$m$ arcs agony can be minimized in~$O(nm^2)$ time if~$k=n$.
\citet{Tatti17} improved the running time to~$O(m^2)$ and also obtained a running time of $O(m(\min(nk,m)+n\log n))$ for arbitrary~$k$
by showing that the problem is the dual of a capacitated circulation problem.
Moreover, the algorithm also solves the problem with integer weights with a running time increasing by a factor of~$O(\log W)$, where~$W$ is the maximum arc weight.
\citet{Tatti17} further showed polynomial-time solvability for every convex penalty function~$p$ and \NP-hardness for a concave penalty (and~$k=4$) and developed several fast heuristics.

For temporal digraphs, \citet{Tatti18} proved \NP-hardness of~\SEG and developed a polynomial-time heuristic for~$\ell=1$ change point.
The \NP-hardness \cite[Proposition~2]{Tatti18} already holds for unweighted temporal digraphs with~$\tau=3$, $\ell=1$ and~$\alpha=0$. The proof\footnote{See arXiv version \url{https://arxiv.org/pdf/1902.01873.pdf}.}, however, does not explicitly specify the number~$k$ of allowed ranks in the constructed instance. A closer inspection shows that the reduction requires~$k\ge 8$ ranks.
\citet{Tatti18} also introduced a second problem called \textsc{Fluc-Agony} where the \emph{fluctuation}, that is, the total sum of rank differences of each vertex over consecutive time steps is minimized. \textsc{Fluc-Agony} is solvable in polynomial time~\cite[Section~4]{Tatti18}.

\paragraph*{Our Contributions.}
We study the parameterized complexity of \SEG with respect to the parameters~$n$ (number of vertices), $k$ (number of ranks), $\ell$ (number of change points), $\alpha$ (target temporal agony), and~$\tau$ (lifetime). Note that the problem is already known to be NP-hard even for~$\ell=1$, $\alpha=0$, $\tau=3$, and~$k\ge 8$ \cite{Tatti18}.
Hence, it cannot be in \XP for the combination of all these four parameters. Clearly, the case~$\ell=0$ or~$\tau \le 2$ is polynomial-time solvable.
We therefore focus on the number~$k$ of ranks (with $\ell=1$) and completely classify its computational complexity. Note that it is natural to assume that~$k$ is rather small since in practice reasonable hierarchies should not have too many levels.
We show that \SEG is polynomial-time solvable for~$k=2$ (\Cref{thm:k=2}) and \NP-hard for~$k=3$ (\Cref{thm:three_ranks}) but polynomial-time solvable if~$\alpha$ is a constant, that is, \XP with respect to~$\alpha$ (\Cref{cor:xp_alpha}).
For $k=4$, however, we even exclude containment in \XP (\Cref{thm:four_ranks}) by showing that the problem is \NP-hard even with constant~$\alpha$, $\ell$ and~$\tau$ (strengthening the previous NP-hardness by \citet{Tatti18}).
On the positive side, we also show that the problem is in \FPT with respect to the combined parameter~$n+\ell$ and in \XP for the parameter~$n$ alone (\Cref{thm:fpt-n-ell}).

\section{Preliminaries}
We start with some basic definitions. For~$n\in\N$, let~$[n]\coloneqq \{1,\ldots,n\}$ and for~$n\le n'\in\N$ let~$[n,n']\coloneqq \{n,n+1,\ldots,n'\}$.

\paragraph*{Temporal Digraphs.}
A \emph{directed} graph (\emph{digraph}) is a pair~$G=(V,A)$ with vertex set~$V$ and arc set~$A\subseteq V^2$.
A \emph{temporal digraph} $\G=(V,(A_t)_{t\in[\tau]})$ with~\emph{lifetime}~$\tau\in\N$ consists of a finite vertex set $V$ and a sequence $A_1,\ldots,A_\tau \subseteq V^2$ of arc sets.
An arc of the form~$(u,v)$ is an \emph{incoming arc} of~$v$ and an arc of the form~$(v,u)$ is an \emph{outgoing arc} of~$v$.
The static digraph~$G_t\coloneqq (V,A_t)$ is called the~\emph{$t$-th layer} of $\G$.
The \emph{size} of $\G$ is $|\G| \coloneqq n + A_\G$ where~$n\coloneqq|V|$ and $A_\G\coloneqq\sum_{t=1}^\tau \max\{1,|A_t|\}$.
The \emph{underlying} digraph of~$\G$ is the static digraph~$(V,\bigcup_{t\in[\tau]}A_t)$.
A \emph{DAG} is a directed acyclic graph.

\paragraph*{Parameterized Complexity.}
A \emph{parameterized problem} is a language $L \subseteq \Sigma^* \times \mathbb{N}$, where $\Sigma$ is a fixed, finite alphabet.
For an instance $(x,k) \in \Sigma^* \times \mathbb{N}$, the number $k$ is called the \emph{parameter}.
A parameterized problem is contained in the complexity class XP if it can be solved in~$|x|^{f(k)}$ time for some computable function~$f$.
Moreover, it lies in the subclass FPT if it can be solved in~$f(k)\cdot |x|^{O(1)}$ time for some computable function~$f$.
For further details on parameterized complexity we refer the reader to the book by~\citet{CyganFKLMPPS15}.

\section{Parameter ~\texorpdfstring{$n + \ell$}{n+l}}
In this section we show that \SEG is in FPT with respect to the combined parameter~$n + \ell$ and also in XP for~$n$ alone.

The basic idea is to use dynamic programming over all possible rank $\ell$-segmentations.
Since every vertex has at most~$\ell + 1$ different ranks in a rank $\ell$-segmentation, it is clear that we can bound the number~$k$ of ranks from above by~$(\ell + 1)n$.
This observation is used in the dynamic program in the proof of the following theorem.\footnote{The \appsymb{} indicates the proof is deferred to the appendix.}

\begin{restatable}[\appref{thm:fpt-n-ell}]{theorem}{fptnell}
	\label{thm:fpt-n-ell}
	\SEG is solvable in~$\tau (\ell n)^{O(n)}\log W$ time, where~$W$ is the maximum weight.
\end{restatable}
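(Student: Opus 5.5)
We prove the theorem by dynamic programming over the time steps. The state at time~$t$ consists of the current rank of every vertex and the number of changes each vertex has already used. The $\log W$ factor is only the cost of arithmetic on agony values, which are bounded by $\tau\cdot|A|\cdot W\cdot k$ and hence have $O(\log(\tau m k W))$ bits. Since $\tau$ and the arc sets appear polynomially anyway, we account for this as a $\log W$ overhead on top of the $|\G|$-dependent terms.

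\textbf{Step 1: reducing the number of ranks.} First we may assume $k\le (\ell+1)n$. Any rank $\ell$-segmentation uses at most $(\ell+1)n$ distinct rank values over all vertices and times. Suppose a value $j\in[k]$ is unused. Then decrementing every rank above $j$ changes no difference $r(u,t)-r(v,t)$ by more than shrinking its magnitude toward the used values, and it preserves the sign and order relations. Because $p_l$ is monotone and the relative order is kept, the agony does not increase. This is the observation already stated before the theorem.

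\textbf{Step 2: the state space.} A state at time~$t$ is a pair $(\rho,\kappa)$ with $\rho\colon V\to[k]$ and $\kappa\colon V\to[0,\ell]$. There are at most $((\ell+1)n)^n(\ell+1)^n=(\ell n)^{O(n)}$ such states. Let $D_t(\rho,\kappa)$ be the minimum agony over time steps $1,\dots,t$ of a partial assignment with $r(\cdot,t)=\rho$ in which vertex $v$ has changed rank exactly $\kappa(v)$ times. A change of a vertex to the same rank is harmless, so we may count only genuine changes.

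\textbf{Step 3: the recurrence.} We set
\[
D_t(\rho,\kappa)=\mathrm{cost}_t(\rho)+\min_{(\rho',\kappa')}D_{t-1}(\rho',\kappa'),
\]
where the minimum ranges over predecessors satisfying, for every $v$, either $\rho'(v)=\rho(v)$ and $\kappa'(v)=\kappa(v)$, or $\rho'(v)\ne\rho(v)$ and $\kappa'(v)=\kappa(v)-1$. Here $\mathrm{cost}_t(\rho)=\sum_{(u,v)\in A_t}w_t(u,v)p_l(\rho(u)-\rho(v))$. The answer is yes if and only if $\min D_\tau\le\alpha$.

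Correctness holds because a rank $\ell$-segmentation is exactly a sequence of rank vectors in which each vertex changes value at most $\ell$ times. Choosing the change points $c(v,i)$ at the actual change times gives a nondecreasing $c$, and any unused change points can be placed at $\tau$ with $r_i=r_{i-1}$.

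\textbf{Step 4: running time, the main obstacle.} Naively, each state has up to $(\ell n)^{O(n)}$ predecessors, since the predecessor set is a product over vertices of at most $k+1$ options each. The total is therefore $\tau\cdot(\ell n)^{O(n)}\cdot(\ell n)^{O(n)}=\tau(\ell n)^{O(n)}$ transitions. Computing $\mathrm{cost}_t$ for all states takes $|A_t|\cdot(\ell n)^{O(n)}$ time, which again fits since $|A_t|\le n^2$.

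The step requiring care is Step 1: showing that compressing ranks never increases agony, including for pairs of vertices at equal rank. Compression preserves equality and strict order between the two values, so $p_l(x)=\max(0,x+1)$ maps each difference to a value that is no larger. This holds because for $x\ge0$ the difference only shrinks, and for $x<0$ the penalty stays $0$. Consequently, XP in $n$ alone follows by bounding $\ell\le\tau$ (more changes are useless), giving $\tau^{O(n)}$.
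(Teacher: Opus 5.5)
Your proposal is correct and takes essentially the same approach as the paper's proof. Both use a dynamic program over time steps whose states record each vertex's current rank and its number of rank changes so far, with the same compatibility condition between consecutive states and the bound $k\le(\ell+1)n$, which gives $\tau(\ell n)^{O(n)}$ table entries and transitions. The one difference is that you prove the rank-compression argument behind $k\le(\ell+1)n$, which the paper simply states as evident.
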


\appendixproof{thm:fpt-n-ell}
{\fptnell*
  We use dynamic programming with a table~$T$ containing an entry for every~$t\in[\tau], c\colon V \to \{0,\ldots,\ell\}$ and~$\rho\colon V \to [k]$.
  The entry $T[t,c,\rho]$ is defined as the minimum temporal agony until time step~$t$ given that~$v \in V$ changed its rank~$c(v)$ many times and has rank~$\rho(v)$ in step~$t$ (we say a rank $\ell$-segmentation~$r$ satisfying this \emph{respects}~$c$ and~$\rho$).
  An entry is undefined if~$c(v)\geq t$ for any~$v \in V$.
  We say that a pair~$(c',\rho') \in (V\to \{0,\ldots,\ell\}) \times (V \to [k])$ is \emph{compatible} with the pair~$(c,\rho) \in (V \to \{0,\ldots,\ell\}) \times (V \to [k])$ if for all~$v \in V$ it holds that~$c'(v) \in \{c(v)-1,c(v)\}\cap\N$ and~$c'(v) = c(v) \iff \rho'(v) = \rho(v)$.
  We denote this with~$(c',\rho') \preceq (c,\rho)$.

  The table~$T$ is computed as follows:
  \begin{align*}
   T[1,\mathbf{0},\rho] &= q(\rho,(V,A_1,w_1),p_l)\\
   T[t,c,\rho] &= \min\limits_{(c',\rho')\preceq(c,\rho)}T[t-1,c',\rho'] + q(\rho,(V,A_t,w_t),p_l) \qquad\text{(for }t>1)
  \end{align*}
\noindent
  The minimum agony over all rank $\ell$-segmentations can then be obtained via \[\min\limits_{\substack{c\colon V \to \{0,\ldots,\ell\}\\ \rho\colon V\to [k]}}T[\tau,c,\rho].\]

  As regards the running time, note that there are~$\tau$ time steps,~$(\ell+1)^n$ functions~$c$ and we can assume~$k^n\leq ((\ell+1)n)^n$ ranking functions~$\rho$ (using the previously made observation about the number of ranks).
  This yields an overall table size of~$\tau(\ell+1)^n(\ell+1)^n n^n \in \tau(\ell n)^{O(n)}$.
  For every entry, at most $k^n\in(\ell n)^{O(n)}$ other entries need to be considered (given~$\rho$, when iterating over all~$\rho'$ the function~$c'$ is implied).
  Computing the agony of a static graph can be done in~$n^{O(1)}\log W$ time~\cite{Tatti17}.
  This gives an overall running time of~$\tau (\ell n)^{O(n)}\log W$.

  We prove correctness via induction over the time steps~$t$.
  For~$t \in [\tau]$, we define~$\mathcal{G}_t\coloneqq (V,(A_{t'})_{t' \in [t]},(w_{t'})_{t'\in[t]})$.
  With~$\OPT(t,c,\rho)$ we denote the minimum possible temporal agony for~$\mathcal{G}_t$ respecting~$c$ and~$\rho$.
  For~$t=1$, clearly~$T[1,\mathbf{0},\rho]=\OPT(1,\mathbf{0},\rho)$ as we are in the static case.
  Given that~$T[t-1,c',\rho'] = \OPT(t-1,c',\rho')$ for any~$c',\rho'$, a feasible solution respecting~$c$ and~$\rho$ with temporal agony~$T[t-1,c',\rho'] + q(\rho,(V,A_t,w_t)),p_l)$ can be constructed in case~$(c',r')$ is compatible with~$(c,r)$ by changing the rank of any vertex~$v$ from~$\rho'(v)$ to~$\rho(v)$ if~$c(v') = c(v)-1$.
  Thus,~$\OPT(t,c,\rho) \leq T[t,c,\rho]$.
  Since we consider all~$(c',\rho')$ compatible with~$(c,\rho)$, it also holds~$T[t,c,\rho] \leq \OPT(t,c,\rho)$ as the optimal rank $\ell$-segmentation respecting~$c$ and~$\rho$ must be compatible with some~$(c',\rho')$ in step~$t-1$.\qed
}

\section{Two Ranks}
In this section we show that \SEG is polynomial-time solvable for~$k = 2$ ranks.
The idea is to show first that (for any penalty function~$p$ with~$p(1)=2p(0)$) the temporal agony~$q(r,\G,p)$ is essentially determined by the weights of in- and outgoing arcs of each vertex (and does not depend on the ranks of its neighbors).

\begin{proposition}
	\label{obs:k=2}
	Let $r\colon V \times [\tau] \to [2]$ be a rank $\ell$-segmentation for the temporal weighted digraph $\G$ and let~$p$ be a penalty
        function with~$p(1)=2p(0)$.
	Then, \[ q(r,\G,p) = p(0)\sum_{t =1}^\tau \left( \sum_{\substack{v \in V\\r(v,t) =1}}\sum_{(u,v)\in A_t}w_t(u,v) +  \sum_{\substack{v \in V\\r(v,t) =2}} \sum_{(v,u)\in A_t}w_t(v,u)\right).\]
\end{proposition}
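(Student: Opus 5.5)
The plan is to prove the identity one arc at a time: fix a time step $t$ and an arc $(u,v)\in A_t$, compute its contribution $w_t(u,v)\cdot p(r(u,t)-r(v,t))$ to $q(r,\G,p)$, and compare it with what the arc contributes to the right-hand side. On the right, an arc $(u,v)\in A_t$ appears in two places. It is counted once in the first inner sum, as an incoming arc of $v$, exactly when $r(v,t)=1$. It is counted once in the second inner sum, as an outgoing arc of $u$, exactly when $r(u,t)=2$. So its contribution to the right-hand side is
\[
w_t(u,v)\, p(0)\bigl(\mathds{1}_{r(v,t)=1}+\mathds{1}_{r(u,t)=2}\bigr).
\]
Swapping the order of summation to sum over arcs rather than vertices is legitimate because all sums are finite.

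Since ranks lie in $[2]$, the difference $r(u,t)-r(v,t)$ is one of $-1$, $0$, $1$, and I will check the four rank pairs.
\begin{itemize}
\item If $r(u,t)=1$ and $r(v,t)=2$, the difference is $-1$, so $p(-1)=0$ (recall $p(x)=0$ for $x<0$). Neither indicator holds, so both sides are $0$.
\item If $r(u,t)=r(v,t)=1$, the difference is $0$, so the left contribution is $w_t(u,v)\,p(0)$. Exactly one indicator holds, namely $r(v,t)=1$.
\item If $r(u,t)=r(v,t)=2$, the left contribution is again $w_t(u,v)\,p(0)$. Exactly one indicator holds, namely $r(u,t)=2$.
\item If $r(u,t)=2$ and $r(v,t)=1$, the difference is $1$, so the left contribution is $w_t(u,v)\,p(1)=2\,w_t(u,v)\,p(0)$ by the hypothesis $p(1)=2p(0)$. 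Both indicators hold, giving $2\,p(0)\,w_t(u,v)$ on the right as well.
\end{itemize}

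Summing these per-arc equalities over all arcs and all $t\in[\tau]$ gives the stated formula. The segmentation structure of $r$ plays no role: the identity holds for any temporal rank assignment into $[2]$.

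There is no real obstacle here. The only care needed is the bookkeeping in the double-counting step, namely checking that each arc is counted once per endpoint condition and never twice within the same sum. A self-loop $(v,v)$, if allowed, has difference $0$ and is counted exactly once on each side, so it is consistent as well.
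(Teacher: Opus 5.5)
Your proof is correct and follows essentially the same route as the paper. The paper also fixes $t$ and splits the arcs of $A_t$ by the ranks of their endpoints: equal ranks contribute $p(0)$, an arc from rank $2$ to rank $1$ contributes $2p(0)$, and forward arcs contribute $0$. It then regroups the equal-rank arcs as incoming arcs of rank-$1$ vertices or outgoing arcs of rank-$2$ vertices, and charges each backward arc once to each sum. Your per-arc indicator check is the arc-by-arc form of that same regrouping.
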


\begin{proof}
  By definition, we have \[q(r,\G,p)=\sum_{t=1}^\tau\sum_{(u,v)\in A_t}w_t(u,v)p(r(u,t)-r(v,t)).\]
  For the inner sum, it holds
  \begin{align*}
    &\sum_{(u,v)\in A_t}w_t(u,v)p(r(u,t)-r(v,t)) = \sum_{\mathclap{\substack{(u,v)\in A_t\\r(u,t)=r(v,t)}}}w_t(u,v)p(0) + \sum_{\mathclap{\substack{(u,v)\in A_t\\r(u,t)=2,r(v,t)=1}}}w_t(u,v)2p(0)\\
    &= p(0)\left(\sum_{\substack{v\in V\\r(v,t)=1}}\sum_{\substack{(u,v)\in A_t\\r(u,t)=1}}w_t(u,v) + \sum_{\substack{v\in V\\r(v,t)=2}}\sum_{\substack{(v,u)\in A_t\\r(u,t)=2}}w_t(v,u) + 2\sum_{\mathclap{\substack{(u,v)\in A_t\\r(u,t)=2,r(v,t)=1}}}w_t(u,v)\right)\\
    &= p(0)\left(\sum_{\substack{v \in V\\r(v,t) =1}} \sum_{(u,v)\in A_t}w_t(u,v) +  \sum_{\substack{v \in V\\r(v,t) =2}} \sum_{(v,u)\in A_t}w_t(v,u)\right).
  \end{align*}
\end{proof}

With \Cref{obs:k=2}, we are now ready to state the algorithm for \SEG.

\begin{theorem}\label{thm:k=2}
	\SEG is solvable in~$O(n^2\tau\ell\log(n\tau W))$ time for~$k=2$, where~$W$ is the maximum arc weight.
\end{theorem}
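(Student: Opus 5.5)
By \Cref{obs:k=2} (applied with $p=p_l$, which satisfies $p_l(1)=2=2p_l(0)$), the temporal agony of any rank $\ell$-segmentation with two ranks is a sum of independent per-vertex contributions. At time $t$, a vertex $v$ contributes its weighted in-degree $\mathrm{in}_t(v)\coloneqq\sum_{(u,v)\in A_t}w_t(u,v)$ if it has rank~$1$, and its weighted out-degree $\mathrm{out}_t(v)\coloneqq\sum_{(v,u)\in A_t}w_t(v,u)$ if it has rank~$2$. There are no interaction terms between vertices, and the constraints of a rank $\ell$-segmentation are themselves per vertex: each vertex may change its rank at most $\ell$ times. The plan is therefore to minimize each vertex's contribution separately and then compare the sum with~$\alpha$.

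For a single vertex $v$ we solve, by dynamic programming over time, the following problem: choose a sequence $s_1,\dots,s_\tau\in[2]$ with at most $\ell$ changes that minimizes $\sum_t \mathrm{cost}_t(v,s_t)$, where $\mathrm{cost}_t(v,1)=\mathrm{in}_t(v)$ and $\mathrm{cost}_t(v,2)=\mathrm{out}_t(v)$. Note that a rank $\ell$-segmentation allows consecutive equal ranks $r_i=r_{i+1}$ and coinciding change points, so "at most $\ell$ changes" is exactly the right constraint. The table $D_v[t,j,b]$ stores the minimum cost up to time $t$ using $j\le\ell$ changes and ending in rank $b$. It is filled by
\[
D_v[t,j,b]=\mathrm{cost}_t(v,b)+\min\bigl(D_v[t-1,j,b],\,D_v[t-1,j-1,3-b]\bigr).
\]
This gives $O(\tau\ell)$ entries per vertex, each computed with $O(1)$ additions of numbers of magnitude at most $n\tau W$, hence in $O(\log(n\tau W))$ bit operations. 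The per-vertex optimum is $\min_{j,b}D_v[\tau,j,b]$. We then sum these optima over all $v$ and compare with~$\alpha$.

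For correctness, the two directions are direct. Any rank $\ell$-segmentation restricted to a vertex $v$ is a feasible sequence for $v$, so its agony is at least the sum of the per-vertex optima. Conversely, combining optimal per-vertex sequences yields a valid rank $\ell$-segmentation: define $r_0,\dots,r_\ell$ and the change points from each sequence, padding with repeated ranks and change points equal to $\tau$ when fewer than $\ell$ changes are used. By \Cref{obs:k=2}, its agony equals the sum of the optima.

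The only real care needed is the running-time accounting. Computing all $\mathrm{in}_t(v)$ and $\mathrm{out}_t(v)$ takes $O(\sum_t|A_t|)\subseteq O(n^2\tau)$ additions. The DP takes $O(n\tau\ell)$ additions overall. Each addition costs $O(\log(n\tau W))$, which gives the claimed bound of $O(n^2\tau\ell\log(n\tau W))$. No step seems genuinely hard once \Cref{obs:k=2} decouples the vertices.
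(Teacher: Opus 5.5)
Your proposal is correct and follows the paper's proof: you decouple the vertices via \Cref{obs:k=2} and run a per-vertex dynamic program over time step, number of changes used, and current rank, with the same recurrence. The differences are cosmetic. You precompute the weighted in- and out-degrees, which gives the slightly sharper bound $O((n^2\tau+n\tau\ell)\log(n\tau W))$ in place of the paper's $O(n\log(n\tau W))$ work per table entry. You should also state the base case $D_v[1,\cdot,b]$ explicitly and set $D_v[t-1,-1,\cdot]=\infty$, though either the ``exactly $j$'' or the ``at most $j$'' convention works with your final $\min_{j,b}$.
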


\begin{proof}
	By \Cref{obs:k=2}, the optimal rank of a vertex $v\in V$ is not depending on the ranks of the other vertices.
	We give a dynamic program that computes the optimal rank for~$v$.
	We define a table $T_v$ with an entry $T_v[t,i,j]$ for each $t\in[\tau]$, $i\in[2]$, and $j \in \{0,\ldots,\ell\}$.
	The entry $T_v[t,i,j]$ is the minimum contribution of $v$ to the agony in the first $t$ layers of $\G$ for any rank $j$-segmentation~$r$ with $r(v,t) = i$.

        By \Cref{obs:k=2}, it holds $T_v[1,1,j]=\sum_{(u,v)\in A_1}w_1(u,v)$ and $T_v[1,2,j]=\sum_{(v,u)\in A_1}w_1(v,u)$ for all $j$.
	For $t>1$, it holds
        \begin{align*}
          T_v[t,1,0] & = T_v[t-1,1,0]+\sum_{(u,v)\in A_t}w_t(u,v),\\
          T_v[t,2,0] & = T_v[t-1,2,0]+\sum_{(v,u)\in A_t}w_t(v,u)
        \end{align*}
        and for~$j>0$ it holds
	\begin{align*}
          T_v[t,1,j] & = \min (T_v[t-1,1,j], T_v[t-1,2,j-1]) + \sum_{(u,v)\in A_t}w_t(u,v),\\
          T_v[t,2,j] & = \min (T_v[t-1,2,j], T_v[t-1,1,j-1]) + \sum_{(v,u)\in A_t}w_t(v,u).       
	\end{align*}
	The minimum cost of any rank $\ell$-segmentation is \[\sum_{v\in V} \min (T_v[\tau,1,\ell],T_v[\tau,2,\ell]).\]
	By adding an appropriate traceback procedure, we can also obtain the optimal rank assignment.

	For every $v\in V$, the table $T_v$ has~$O(\tau\ell)$ entries and each of them can be computed in~$O(n\log(n\tau W))$ time.
	Here, the factor~$\log(n\tau W)$ comes from the fact that the maximum value in a table entry is bounded by~$n\tau W$.
	Thus, the overall running time is~$O(n^2\tau\ell\log(n\tau W))$.
\end{proof}

      As a side result, we strengthen the NP-hardness result by~\citet[Proposition~4]{Tatti17} for the static case which states that it is \NP-hard to minimize~$q(r,G,p)$ for certain concave penalties~$p$ (the reduction uses $k=4$ ranks and weighted arcs).
We show NP-hardness for $k=2$ on unweighted static digraphs.

\begin{restatable}[\appref{thm:k=2_static}]{theorem}{kTwoStatic}
	\label{thm:k=2_static}
	Minimizing~$q(r,G,p)$ with~$k=2$ ranks is \NP-hard on static unweighted digraphs~$G$ for every penalty function~$p$ with~$p(1) < 2p(0)$.
  If~$0<p(0)=p(1)$, then it is even \NP-hard on unweighted DAGs.
\end{restatable}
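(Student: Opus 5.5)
The plan is to first rewrite the static objective for $k=2$ as an explicit cut-type expression, and then reduce from two classical cut problems: \textsc{Max Cut} for the general statement and \textsc{Max Directed Cut} on DAGs for the second part.

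\textbf{Rewriting the objective.} Fix an unweighted digraph $G=(V,A)$ with $m=|A|$ and a rank function $r\colon V\to[2]$. Since $p(x)=0$ for $x<0$, each arc falls into one of three classes:
\begin{itemize}
\item an arc inside one rank class costs $p(0)$;
\item a \emph{forward} arc from rank~1 to rank~2 costs $p(-1)=0$;
\item a \emph{backward} arc from rank~2 to rank~1 costs $p(1)$.
\end{itemize}
Writing $F$ and $B$ for the sets of forward and backward arcs, this gives
\[q(r,G,p)=p(0)\,m-p(0)\,|F|-(p(0)-p(1))\,|B|.\]
Everything below is a direct consequence of this identity, in the same spirit as \Cref{obs:k=2}.

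\textbf{General case $p(1)<2p(0)$.} I reduce from \textsc{Max Cut}, which is \NP-hard. Given an undirected graph $H=(V,E)$, build $G$ on the same vertex set by replacing every edge $\{u,v\}$ with the two arcs $(u,v)$ and $(v,u)$. This is allowed, since arc sets are arbitrary subsets of $V^2$, and $G$ stays unweighted.

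A rank function $r$ corresponds to the bipartition $(r^{-1}(1),r^{-1}(2))$. The two arcs of an edge then contribute as follows:
\begin{itemize}
\item if the edge is uncut, both arcs lie inside a class and contribute $2p(0)$;
\item if the edge is cut, one arc is forward and one backward, so they contribute $0+p(1)=p(1)$.
\end{itemize}
Hence, for the cut $C$ of $H$ induced by $r$,
\[q(r,G,p)=2p(0)|E|-(2p(0)-p(1))\,|C|.\]
Because $2p(0)-p(1)>0$, minimizing $q$ is exactly maximizing $|C|$. Therefore $H$ has a cut of size at least $s$ if and only if some $r$ has $q(r,G,p)\le 2p(0)|E|-(2p(0)-p(1))s$. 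Note that $p(1)\ge0$ forces $p(0)>0$, so the reduction is never degenerate.

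\textbf{DAG case $0<p(0)=p(1)$.} Here the backward term vanishes and the identity becomes
\[q(r,G,p)=p(0)\,(m-|F|).\]
So minimizing $q$ is equivalent to finding a bipartition $(S,V\setminus S)$ that maximizes the number of arcs directed from $S$ to $V\setminus S$, with $S=r^{-1}(1)$. This is exactly \textsc{Max Directed Cut}, which is known to be \NP-hard even on DAGs (Lampis, Kaouri, Mitsou). The reduction is therefore the identity map on the graph, with threshold $p(0)(m-s)$.

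\textbf{Main obstacle.} The only nontrivial ingredient is the second part: it depends on citing \NP-hardness of \textsc{Max DiCut} restricted to acyclic digraphs, since the two-arcs-per-edge gadget from the general case necessarily creates 2-cycles. If that citation is not available, my fallback is to reduce \textsc{Max Cut} to \textsc{Max DiCut} on DAGs myself. I would try replacing each edge with an acyclic gadget that uses auxiliary vertices ordered along a fixed topological order, and check that an optimal placement of the auxiliary vertices recovers a cut-dependent additive gain.
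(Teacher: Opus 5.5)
Your proposal is correct and matches the paper's proof. The paper uses the same \textsc{Max Cut} reduction, replacing each edge by two opposite arcs, with threshold $2p(0)|E|-s(2p(0)-p(1))$, and the same identity reduction from \textsc{Max Directed Cut} on DAGs, citing the same Lampis--Kaouri--Mitsou result, with threshold $p(1)(|A|-s)$, so your fallback plan is not needed.
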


\appendixproof{thm:k=2_static}
{\kTwoStatic*
  We give a polynomial-time reduction from \textsc{Maximum Cut} \cite{GareyJ79}. The problem is to decide whether the vertices of an undirected graph~$H=(V,E)$ can be partitioned into two subsets~$V_1$ and~$V_2$ such that
  $|\{\{u,v\}\in E\mid u\in V_1, v\in V_2\}|\ge k$. For an instance~$(H,k)$, we build the digraph~$G=(V,A)$ with~$A\coloneqq\{(u,v),(v,u)\mid \{u,v\}\in E\}$. We claim that~$H$ has a cut of size at least~$k$ if and only if there exists a rank assignment~$r\colon V \to [2]$ with~$q(r,G,p)\le 2p(0)|E|-k(2p(0)-p(1))$.
  To verify the correctness, assume first that~$H$ has a cut of size $s\ge k$ between the vertex sets~$V_1$ and~$V_2$. Then, let~$r(v)\coloneqq 1$ for all~$v\in V_1$ and~$r(v)\coloneqq 2$ for all~$v\in V_2$ and note that
  \begin{align*}
    q(r,G,p)&=2p(0)(|E|-s) +p(1)s=2p(0)|E|-s(2p(0)-p(1))\\
    &\le 2p(0)|E|-k(2p(0)-p(1))
  \end{align*}
  since~$s\ge k$ and~$2p(0)>p(1)$.

  Conversely, let~$r$ be a rank assignment with~$q(r,G,p)\le 2p(0)|E|-k(2p(0)-p(1))$ and let~$V_1\coloneqq \{v\mid r(v)=1\}$, $V_2\coloneqq\{v\mid r(v)=2\}$, and let~$s$ be the size of the cut between~$V_1$ and~$V_2$. Then, we have
  \begin{align*}
    q(r,G,p)= 2p(0)|E|-s(2p(0)-p(1))\le 2p(0)|E| - k(2p(0)-p(1)),
  \end{align*}
  which implies~$s\ge k$ since~$2p(0)>p(1)$.
  
  For the \NP-hardness on DAGs, we reduce from \textsc{Maximum Directed Cut}, which is \NP-hard on DAGs~\cite{LKM11}. For a given DAG~$G=(V,A)$ and integer~$k$, the task is to determine whether there exists a subset~$S\subseteq V$ such that $|\{(u,v)\in A\mid u\in S, v\in V\setminus S\}|\ge k$.
  We claim that~$G$ has a directed cut of size at least~$k$ if and only if there exists a rank assignment~$r\colon V\to[2]$ with~$q(r,G,p)\le p(1)(|A|-k)$.

  Let~$S\subseteq V$ be a subset with~$s\ge k$ outgoing arcs. Define~$r(v)\coloneqq 1$ for~$v\in S$ and~$r(v)\coloneqq 2$ for~$v\in V\setminus S$.
  Then $q(r,G,p)\le p(1)(|A|-s)\le p(1)(|A|-k)$ since~$p(0)=p(1)>0$.

  Let now~$r$ be a rank assignment with~$q(r,G,p)\le p(1)(|A|-k)$ and let~$S\coloneqq \{v\in V\mid r(v)=1\}$ and~$s$ be the number of outgoing arcs from~$S$.
  Then, $q(r,G,p)=p(1)(|A|-s)$ since~$p(1)=p(0)$, and thus~$s \ge k$ since~$p(1)>0$.\qed
}

\section{Three Ranks}

In this section, we study \SEG restricted to $k = 3$ ranks.
We show that the problem is NP-hard in general but polynomial-time solvable if the permitted temporal agony~$\alpha$ is a fixed constant.
Interestingly, the NP-hardness reduction is from~\textsc{Max 2-SAT} while the algorithm is based on a reduction to~\textsc{2-SAT}. 

We start by providing an algorithm for~$\alpha =0$.
\begin{theorem}
\label{thm:3_ranks_polytime}
	\SEG is polynomial-time solvable if $k=3$, $\alpha=0$, and $\ell =1$.
\end{theorem}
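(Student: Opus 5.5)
The plan is to reduce the instance to \textsc{2-SAT}. Since~$p_l(0)=1>0$, agony~$0$ means that at every time step~$t$ and for every arc~$(u,v)\in A_t$ we need~$r(u,t)<r(v,t)$ strictly. With~$k=3$ this gives a clean local description. Call~$v$ \emph{active} at~$t$ if it is incident to an arc of~$A_t$, and let~$W_{v,t}\subseteq[3]$ be its \emph{window}: $\{1,2\}$ if it has only outgoing arcs, $\{2,3\}$ if only incoming, and~$\{2\}$ if both. An inactive vertex is unconstrained at~$t$. Every arc~$(u,v)\in A_t$ forces~$r(u,t)\in\{1,2\}$ and~$r(v,t)\in\{2,3\}$, and beyond the windows it forbids exactly the combination~$r(u,t)=r(v,t)=2$.

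\textbf{Variables and arc constraints.} For each active pair~$(v,t)$ I introduce one Boolean~$z_{v,t}$ meaning ``$v$ takes the upper value of~$W_{v,t}$'' (fixed when~$|W_{v,t}|=1$). Then~$r(u,t)=2$ is the literal~$z_{u,t}$, and~$r(v,t)=2$ is~$\neg z_{v,t}$. So each arc~$(u,v)\in A_t$ yields the 2-clause~$\neg z_{u,t}\vee z_{v,t}$.

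To model the single change point I add a monotone chain~$c_{v,1},\dots,c_{v,\tau}$ per vertex, meaning ``$v$ has switched to~$r_1$ by time~$t$''. Monotonicity is enforced by the 2-clauses~$c_{v,t}\to c_{v,t+1}$. Inactive time steps impose nothing, so only active steps must be made consistent.

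\textbf{Consistency within a segment.} Rank equality must hold for any two active steps~$t<t'$ lying in the same segment, that is, when~$\neg c_{v,t'}$ (both before the change) or when~$c_{v,t}$ (both after it). Call these the \emph{condition literals}. The intended trick is that, whenever the two windows differ, equality pins down each variable individually, so every resulting clause has just two literals. For example, with~$W_{v,t}=\{1,2\}$ and~$W_{v,t'}=\{2,3\}$, equality means~$r=2$, i.e.~$z_{v,t}$ and~$\neg z_{v,t'}$. This gives the 2-clauses~$c_{v,t'}\vee z_{v,t}$ and~$c_{v,t'}\vee\neg z_{v,t'}$, and symmetrically with~$\neg c_{v,t}$ for the second segment. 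A window~$\{2\}$ against anything similarly forces single literals.

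\textbf{Main obstacle: equal windows.} The hard case is two steps with the same two-element window. There equality reads~$z_{v,t}\leftrightarrow z_{v,t'}$ under a condition literal, which is a 3-literal clause. My plan is to eliminate this by a normalization argument. Within one segment and one window type, the choice ``upper vs.\ lower'' should be made once per segment rather than per time step. I would therefore replace all~$z_{v,t}$ of a fixed window type by just two variables per vertex: $h^0_{v,X}$ for segment~$0$ and~$h^1_{v,X}$ for segment~$1$, with~$X\in\{\{1,2\},\{2,3\}\}$.

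An arc constraint at~$t$ then needs to select~$h^0$ or~$h^1$ according to~$c_{u,t}$ and~$c_{v,t}$. This turns~$\neg z_{u,t}\vee z_{v,t}$ into clauses of up to four literals. To keep width two, I would try to show that an optimal (zero-agony) solution can be assumed to change ranks only at time steps where the vertex's window changes, or at a boundary forced by a neighbour. This would restrict the relevant change points enough to branch over them per connected interval, or to express ``$c$ selects~$h^0$ versus~$h^1$'' via implications only.

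If this normalization fails, the fallback is to observe that every single-window type contributes a unit literal. Such a constraint splits into the 2-clauses~$(c_{v,t}\vee\ell^0)\wedge(\neg c_{v,t}\vee\ell^1)$, as in the~$\neg z_{u,t}$ case above. The remaining genuinely binary arc constraints would then have to be shown to occur only between steps where the segment of at least one endpoint is already determined. Establishing one of these two structural claims is the step I expect to be the crux. Once it holds, satisfiability of the resulting 2-CNF formula is equivalent to a rank~$1$-segmentation with agony~$0$, and it can be decided in linear time in the formula size, which is polynomial in~$|\G|$.
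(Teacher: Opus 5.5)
Your local encoding is correct: the windows, the arc clauses $\neg z_{u,t}\vee z_{v,t}$, the monotone chain, and the conditional consistency clauses for differing windows. You also locate the difficulty correctly. But the argument stops there. The step you yourself call the crux is never established, so this is not yet a proof. Neither proposed repair works as stated:
\begin{itemize}
\item Replacing $z_{v,t}$ by $h^0_{v,X},h^1_{v,X}$ makes each arc clause select via $c_{u,t}$ and $c_{v,t}$, which gives clauses of width up to four.
\item The structural claims meant to restore width two are unproved. These are that ranks change only where the window changes ``or at a boundary forced by a neighbour'', and that binary arc constraints occur only where one endpoint's segment is already known.
\item The fallback would not suffice even if true. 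Knowing only $u$'s segment at $t$ still leaves 3-clauses such as $\neg h^0_{u,X}\vee c_{v,t}\vee h^0_{v,Y}$.
\item ``Branching per connected interval'' comes with no polynomial bound.
\end{itemize}

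The missing idea is a dominance argument showing that the upper/lower choice is not a genuine degree of freedom. Call an interval \emph{mixed} for $v$ if the windows of $v$'s active steps in it intersect in $\{2\}$. Take any zero-agony solution and modify it per segment of $v$:
\begin{itemize}
\item if all active steps in the segment have window $\{1,2\}$, move it to rank~$1$;
\item if all have window $\{2,3\}$, move it to rank~$3$;
\item a mixed segment already has rank~$2$ and stays.
\end{itemize}
At every step sources only move down and sinks only move up, so no arc becomes violated. Hence $r(v,t)$ is a function of the change point alone.

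Concretely, let $\varphi_v$ be the first $s$ with $[1,s]$ mixed and $\lambda_v$ the last $s$ with $[s,\tau]$ mixed. Then for active $t$ one gets $r(v,t)=2$ if and only if $\neg c_{v,\max(t,\varphi_v)}\vee c_{v,\min(t,\lambda_v)}$, dropping a disjunct whose index is undefined. So $\neg(r(u,t)=2\wedge r(v,t)=2)$ expands into four 2-clauses over chain variables only. The $z$-variables disappear and your encoding goes through.

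The paper reaches the same collapse differently. It propagates forced ranks and notes that a vertex alternating between source and sink more than once must have rank~$2$ on $[\varphi,\lambda-1]$. Each remaining vertex then has just two candidate patterns: extreme rank then~$2$, or~$2$ then extreme rank. This is encoded by one Boolean per vertex with two implications per arc.
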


\begin{proof}

Let $\G=(V,A_1,\ldots,A_\tau, (w_t)_{t\in[\tau]})$ be a weighted temporal digraph.
If any layer $(V,A_t)$ contains a directed cycle or a directed path of length at least~$3$, then there is no rank $1$-segmentation with~$3$ ranks and temporal agony~$0$.
Hence, we can answer ``no'' in polynomial time.
In the following, we assume that each layer is a DAG with maximum directed path length~2.
We start with a simple preprocessing to determine which vertex ranks are already enforced.

Clearly, if layer~$t$ contains a directed~$P_3$, say $\{(u,v),(v,w)\}\subseteq A_t$, then the ranks are determined to $r(u,t)=1$, $r(v,t)=2$ and~$r(w,t)=3$.
Hence, we can settle the ranks for all vertices appearing in some directed~$P_3$ in some layer.
Moreover, if the rank of a vertex~$v$ is determined at two time steps~$t < t'$ with~$r(v,t)=r(v,t')$,
then we have to set~$r(v,j)\coloneqq r(v,t)$ for all~$j=t+1,\ldots,t'$.
If~$r(v,t)\neq r(v,t')$, then let~$r(v,t)\neq 2$ (the case $r(v,t')\neq 2$ is analogous).
Let~$t^*$ be the maximum layer in~$[t,t'-1]$ such that~$v$ can have rank~$r(v,t)$ from layer 1 to~$t^*$ and rank~$r(v,t')$ from layer~$t^*+1$ to~$\tau$ (without introducing penalty).
It can be checked in polynomial time whether such a $t^*$ exists. If not, then we can answer ``no'', and otherwise we can set the ranks of~$v$ accordingly.
Furthermore, if the rank of a vertex~$v$ is set to~$r(v,t)=2$ at some layer~$t$, then this enforces
$r(u,t)=1$ for all $(u,v)\in A_t$ and~$r(u,t)=3$ for all $(v,u)\in A_t$.

We repeatedly check for determined vertex ranks as described above until no new ranks are determined anymore (or we answered ``no'').
Clearly, if a vertex is forced to change its rank more than once, then we can answer ``no''.
This takes polynomial time overall.
If we already obtain a completely determined rank 1-segmentation, then we found a solution.
Otherwise, there exists a vertex~$v$ whose rank is not yet determined for each layer.
Then, note that our preprocessing implies that $v$'s rank is determined in some contiguous (possibly empty) time interval.
Moreover, if $r(v,t)$ is not determined, then~$v$ is a source or a sink (or both if isolated) and does not have any neighbor with rank 2
in layer~$t$.
Next, we show how to check whether the rank assignment can be completed without introducing temporal agony.

Assume first that~$v$ has no determined rank at all.
If~$v$ changes between (non-isolated) source and sink at most once, then we can simply set its rank to~1 for all source layers and to~3 for all sink layers.
If~$v$ changes between non-isolated source and non-isolated sink more than once, then let~$\varphi\in[2,\tau-1]$ be the first and~$\lambda\in[\varphi+1,\tau]$ be the last layer where these changes appear.
Clearly, $r(v,j)=2$ must hold for all $j\in[\varphi,\lambda-1]$. Hence, we check whether this introduces any agony with other already determined vertices. If not, then we also set the ranks of all undetermined neighbors within this time interval accordingly. 
We define a \emph{type} of~$v$ depending on its state in the boundary time intervals~$T_\varphi\coloneqq[1,\varphi-1]$ and~$T_\lambda\coloneqq[\lambda,\tau]$:
We say that~$v$ has type~$(a,b)\in\{1,3\}^2$ where~$a=1$ (3) means that~$v$ is a source (sink) in~$T_\varphi$ and $b=1$ (3) means that~$v$ is a source (sink) in~$T_\lambda$.
There are two possible rank assignments for a vertex of type~$(a,b)$ regarding intervals~$T_\varphi$ and~$T_\lambda$: rank~$a$ in~$T_\varphi$ and rank~2 in~$T_\lambda$ or rank~$2$ in $T_\varphi$ and rank~$b$ in $T_\lambda$.
This will later lead us to a certain Boolean 2-CNF formula to be checked for satisfiability.
For now, we continue with vertices whose ranks are partially determined already.

Assume that~$v$ has a determined rank in some interval~$[t,t']$ with $1\le t\le t'\le\tau$.
Note that the rank of~$v$ does not change within~$[t,t']$ (by our preprocessing above).
If $r(v,t)=1$ (the case~$r(v,t)=3$ is analogous), then~$v$ has to be a source in all layers from~$[1,t]$ or in all layers from~$[t',\tau]$ and might change once from source to sink (or sink to source) in one of these two time intervals. This can easily be checked and the ranks can be set accordingly (to 1 or 3).
Now assume that $r(v,t)=2$. Again, let~$\varphi\in [1,t]$ be the first layer where~$v$ changes between non-isolated sink or source (and set~$\varphi\coloneqq t$ if~$v$ does not change) and let~$\lambda\in [t',\tau]$ be the last layer where~$v$ changes its state (setting~$\lambda\coloneqq t'$ if this does not happen). Again, we have to set~$r(v,j)=2$ for all~$j\in[\varphi,\lambda-1]$.
Note that, if~$t=\varphi=1$ or~$t'=\lambda=\tau$, then the rank assignment of~$v$ can easily be completed.
Otherwise, $v$ is a vertex of one of the four types defined above.

Hence, the remaining vertices are those which have rank~2 in some interval~$[i,j]$ with~$1<i \le j <\tau$ and have two possible rank assignments depending on their type. Denote these vertices by~$V'$.
We now show how to assign their ranks.
For $v\in V'$, let~$D_v=[i,j]$ be the interval where the rank is determined to 2 and let
$F_v\coloneqq[1,i-1]$ and $L_v\coloneqq[j+1,\tau]$.
We construct a Boolean 2-CNF formula~$\Phi$ as follows: For a vertex~$v$ of type~$(a,b)$, let~$x_v$ be a Boolean variable, where
\begin{align*}
  &x_v= \texttt{true} \quad \widehat{=} \quad \forall t\in F_v:r(v,t)=a \; \wedge \; \forall t\in L_v: r(v,t)=2 \text{ and}\\
  &x_v=\texttt{false} \quad \widehat{=} \quad \forall t\in F_v:r(v,t)=2 \; \wedge \; \forall t\in L_v: r(v,t)=b.
\end{align*}
Let~$(u,v)\in A_t$ with $u,v\in V'$. Note that~$t\not\in D_u$ and~$t\not\in D_v$ (since $u$ or $v$ would have some determined rank $\neq 2$ at time step $t$).
Also, note that $u$ is a source and~$v$ is a sink in time step~$t$ (meaning one component of their type is implied by the existence of the arc). We add the following clauses to~$\Phi$:
\begin{align*}
  C_{u,v,t}\coloneqq \begin{cases}
                      (\bar{x}_u \rightarrow x_v), &t\in F_u \wedge t\in F_v\\
                      (\bar{x}_u \rightarrow \bar{x}_v), &t\in F_u \wedge t\in L_v\\
                      (x_u \rightarrow x_v), &t\in L_u \wedge t\in F_v\\
                      (x_u \rightarrow \bar{x}_v), &t\in L_u \wedge t\in L_v\\
                    \end{cases},\\
  C_{v,u,t}\coloneqq \begin{cases}
                      (\bar{x}_v \rightarrow x_u), &t\in F_v \wedge t\in F_u\\
                      (\bar{x}_v \rightarrow \bar{x}_u), &t\in F_v \wedge t\in L_u\\
                      (x_v \rightarrow x_u), &t\in L_v \wedge t\in F_u\\
                      (x_v \rightarrow \bar{x}_u), &t\in L_v \wedge t\in L_u\\
                    \end{cases}.
\end{align*}

It is easy to check that the clause $C_{u,v,t}$ encodes $r(u,t)=2 \implies r(v,t)=3$ and the clause $C_{v,u,t}$ encodes $r(v,t)=2\implies r(u,t)=1$.
Hence, $\Phi$ is satisfiable if and only if there is a rank assignment with agony~0.
It is well-known that \textsc{2-SAT} can be solved in polynomial time~\cite{AspvallS80}.
\end{proof}

As a corollary we obtain that \SEG with $k=3$ is in XP for the parameter~$\alpha$, that is, it is solvable in polynomial time for every constant~$\alpha$.
The proof uses the algorithm from \Cref{thm:3_ranks_polytime} for~$\alpha=0$ as a subroutine.
Note that for~$k=4$, this approach will not work since the zero-agony case is already NP-hard as we show in \Cref{sec:four_ranks}.

\begin{corollary}
	\label{cor:xp_alpha}
	\SEG can be solved in~$(\tau n)^{\alpha} \cdot n^{O(1)}$ time for~$k=3$ and~$\ell = 1$.
\end{corollary}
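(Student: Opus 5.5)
Our plan is to guess the set of \emph{penalized arc occurrences} and then reduce to the zero-agony case, which \Cref{thm:3_ranks_polytime} solves. Assume for now that all weights are positive integers. With $k=3$ the penalty $p_l(r(u,t)-r(v,t))$ of an arc occurrence $(u,v)\in A_t$ is $0$, $1$, $2$, or $3$. Since $w_t\ge 1$, any rank $1$-segmentation with temporal agony at most~$\alpha$ has at most $\alpha$ arc occurrences $(u,v,t)$ with positive penalty. Each such occurrence has a weighted penalty $w_t(u,v)\,p_l(\cdot)\in\{w_t,2w_t,3w_t\}$, i.e.\ penalty value $x\in\{1,2,3\}$, corresponding to a rank difference $r(u,t)-r(v,t)\in\{0,1,2\}$.

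The enumeration proceeds as follows. We guess a set $S$ of at most $\alpha$ arc occurrences, together with, for each $(u,v,t)\in S$, the exact pair of ranks $(r(u,t),r(v,t))$ with $r(u,t)\ge r(v,t)$. There are at most six such pairs. We keep only guesses whose total weighted penalty is at most~$\alpha$. The number of arc occurrences is at most $\tau n^2$, so the number of guesses is $(\tau n^2)^{\alpha}\cdot 6^{\alpha}$. To reach the stated $(\tau n)^{\alpha}$ form, one can charge the guess to a vertex-time pair (the tail $u$ at time $t$) together with its rank, and then let the zero-agony algorithm handle which arcs are violated. However, the cleaner plan is to guess the arc occurrences directly and absorb the factor into $n^{O(1)}$ only as far as it is legitimate. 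If necessary, we accept $(\tau n^2)^\alpha$ and argue that only $(\tau n)^\alpha$ genuinely matter, by guessing vertex-time-rank triples for endpoints of violated arcs.

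For each guess, we build a modified instance. We delete each guessed occurrence from $A_t$. We then fix the guessed ranks $r(u,t)$ and $r(v,t)$ as \emph{predetermined} ranks, and all remaining arcs must have penalty $0$. We need the polynomial algorithm of \Cref{thm:3_ranks_polytime} to accept predetermined ranks. Its preprocessing already works with partially determined ranks and propagates them: the interval-filling for equal determined ranks, the choice of the change point $t^*$, forcing neighbors of rank-$2$ vertices, and finally the 2-CNF formula~$\Phi$. So we will argue that initial fixed ranks can simply be fed into this propagation. Alternatively, fixed ranks can be enforced by gadgets that force a rank with zero agony. For example, an auxiliary directed $P_3$ in layer~$t$ through new vertices pins a vertex to rank $1$, $2$, or $3$; new vertices are cheap, since their own rank constraints are trivial. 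The gadget route is safer because it uses \Cref{thm:3_ranks_polytime} as a black box. The answer is ``yes'' if and only if some guess yields a zero-agony rank $1$-segmentation. Correctness follows directly: an optimal solution induces one guess, and conversely, for any guess, the gadget-augmented solution restricted to $V$ has agony equal to the guessed penalty, which is at most $\alpha$.

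The main obstacle is making the forcing consistent with the segmentation constraint $\ell=1$ for the gadget vertices, and verifying that a pinning $P_3$ in layer $t$ does not create agony elsewhere. Using fresh gadget vertices per pinned occurrence, each present only in that one layer and isolated otherwise, resolves this, since isolated vertices incur no agony. A secondary point is matching the exact exponent $(\tau n)^\alpha$ rather than $(\tau n^2)^\alpha$, which I would address by the vertex-centric guessing described above.
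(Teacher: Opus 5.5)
This is essentially the paper's proof: you branch over at most $\alpha$ penalized arc occurrences together with their endpoint ranks, delete those occurrences, pin the endpoint ranks, and run the zero-agony algorithm of \Cref{thm:3_ranks_polytime} as a black box. The pinning gadget differs slightly. The paper adds three global anchors $d_1,d_2,d_3$, puts $(d_1,d_2),(d_2,d_3)$ into $A_1$ and $A_\tau$, and relies on $\ell=1$ to keep the anchors at ranks $1,2,3$ in between. Your fresh per-pin $P_3$ gadgets are an equally valid alternative. Your explicit filter on the total guessed penalty is exactly what the converse direction needs.

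The one thing to discard is your attempt to reach an exponent of exactly $\alpha$.
\begin{itemize}
\item Guessing only a tail with its rank and ``letting the zero-agony algorithm handle which arcs are violated'' cannot work. That subroutine tolerates no violated arc, so every violated occurrence must be removed explicitly in its branch.
\item Guessing vertex--time--rank triples for all endpoints does work: pin them all and delete exactly those arcs whose two pinned endpoints violate. However, there can be up to $2\alpha$ endpoints, which gives roughly $(3\tau n)^{2\alpha}$ branches, not $(\tau n)^{\alpha}$.
\end{itemize}
The paper's own proof has the same slack: it counts $(9\tau n^2)^{\alpha}\in(\tau n)^{O(\alpha)}$ branches. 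So what is actually established, by you and by the paper, is $(\tau n)^{O(\alpha)}$ branches, each solved in polynomial time. That suffices for membership in \XP. You should state that bound rather than claim the literal exponent $\alpha$.
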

\begin{proof}
 Let~$\G=(V,(A_t)_{t\in[\tau]})$ be the directed temporal graph in the \SEG-instance.
 We branch at most~$\alpha$ times on which arc~$(u,v)\in A_t$ causes non-zero agony ($\ge 1$ since all arc weights are integers) at which time step~$t$ and also the ranks~$r(u,t)$ and~$r(v,t)$.
 This yields at most~$(n^2 \cdot \tau \cdot 3^2)^\alpha \in (\tau n)^{O(\alpha)}$ possibilities to generate temporal agony at most~$\alpha$.
  For each choice, we delete the respective arcs and fix the ranks of the vertices at that time step.
To fix the ranks, we add three dummy vertices~$d_1,d_2,d_3$ to~$\G$ and the arcs~$(d_1,d_2), (d_2,d_3)$ to~$A_1$ and to~$A_\tau$.
 Clearly, these vertices induce no additional agony if and only if~$r(d_1,t)=1$, $r(d_2,t)=2$ and $r(d_3,t)=3$ for all time steps~$t$.
 Now, to fix the rank of a vertex~$v$, we add the corresponding arcs to (and from) the dummy vertices.
  Hence, after branching we check in polynomial time whether the resulting instance has temporal agony zero (using~\Cref{thm:3_ranks_polytime}).
\end{proof}

We remark that both \Cref{thm:3_ranks_polytime} and \Cref{cor:xp_alpha} are formulated for \SEG but actually work for any penalty function~$p\colon \Z\to\N$ satisfying $x\ge 0 \implies p(x)>0$.
The same is not true for the following theorem where the reduction in the proof uses the exact definition of~$p_l$.

\begin{theorem}\label{thm:three_ranks}
	\USEG is NP-complete even for $k=3$ and $\ell =1$.
\end{theorem}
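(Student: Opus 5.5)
Membership in NP is immediate: a rank $1$-segmentation (ranks $r_0,r_1$ and one change point per vertex) is a polynomial-size certificate, and its temporal agony can be evaluated in polynomial time. For hardness I would reduce from \textsc{Max 2-SAT}, as announced at the start of this section. The reduction mirrors the 2-SAT encoding in the proof of \Cref{thm:3_ranks_polytime}, where a vertex of type $(a,b)$ makes a binary choice between ``rank $a$ before, rank 2 after'' and ``rank 2 before, rank $b$ after''. Since $\alpha$ is no longer $0$, violated clauses should cost exactly one unit of agony. This is where the exact form of $p_l$ enters: an arc from rank 2 to rank 2 costs $1$, while an arc from rank 3 to rank 1 costs $3$.

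\emph{Construction.} Take a Max 2-SAT instance with variables $x_1,\dots,x_n$, clauses $C_1,\dots,C_m$, and target $s$ (at least $s$ satisfied clauses). Use three frame vertices $d_1,d_2,d_3$ with arcs $(d_1,d_2),(d_2,d_3)$ in every layer, replicated enough times (as parallel copies via extra gadget vertices, since the digraph is unweighted) that violating them is never worthwhile. This pins their ranks to $1,2,3$ throughout. For each variable $x_i$ introduce a vertex $v_i$ and use a short lifetime, say $\tau=3$ layers: in layer~2, force $r(v_i,2)=2$ via heavy arcs $(d_1,v_i),(v_i,d_3)$. In layer~1 make $v_i$ a heavy source, so its rank should be $1$ or $2$. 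In layer~3 make it a heavy sink, so its rank should be $2$ or $3$. With $\ell=1$, $v_i$ cannot take ranks $1,2,3$ across the three layers without two changes. It must therefore choose between (true) rank $1$ then $2$ at layer 2 onward, or (false) rank $2$ through layer 2 then $3$. The heavy replication makes every other option cost more than the budget.

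\emph{Clause gadgets.} A clause is encoded by a single arc (or a bounded number of arcs) between the variable vertices, placed in layer 1 or layer 3 so that it costs $0$ when the clause is satisfied and exactly $1$ otherwise. In layer 1, $v_i$ has rank 1 if $x_i$ is true and rank 2 if false; in layer 3, rank 2 if true and rank 3 if false. An arc $(u,v)$ costs 0 iff $r(u)<r(v)$, costs 1 if the ranks are equal, and costs 2 or 3 when it points downward. For example, an arc $(v_i,v_j)$ in layer 1 costs 0 unless $x_j$ is true, and costs 1 exactly when $x_i$ and $x_j$ are both false; this realizes the clause $x_i\vee x_j$. Mixed-layer arcs and arcs into frame vertices can similarly realize $\bar x_i\vee \bar x_j$ and $x_i\vee\bar x_j$. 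If a single arc cannot realize some clause type with penalty exactly $1$, I would go to more layers ($\tau$ larger, with variables alternating between the states) or add auxiliary per-clause vertices. The budget is $\alpha = m-s$ plus a fixed baseline cost.

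\emph{Main obstacle.} The hardest step is getting the arithmetic exactly right. Each clause gadget must cost exactly the same fixed amount when satisfied and exactly one more when violated, in every one of the four truth combinations. No assignment may reach cost $2$ or $3$ in a way that breaks the count, since, for example, $3\to1$ costs $3$. Because the gadget must be unweighted, I also need to verify that the replicated heavy arcs dominate so that the binary-choice structure is enforced. I would first enumerate the rank pairs for each arc placement in a table, choose placements that give penalties $\{0,0,0,1\}$, and only then add auxiliary vertices where a single arc fails.
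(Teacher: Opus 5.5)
\textbf{There is a genuine gap: the clause gadget is missing, and your framework cannot supply one.} Your NP-membership argument and the replicated anchors are fine. The clause gadget is the actual content of the reduction, and it is not there.

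Your example is miscomputed. In layer~1 (true $\mapsto 1$, false $\mapsto 2$) the arc $(v_i,v_j)$ costs $1,0,2,1$ for $(x_i,x_j)=(T,T),(T,F),(F,T),(F,F)$, not $0,0,0,1$.

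No cleverer placement of arcs repairs this. In each of your three layers, all variable vertices lie in a window $\{c,c+1\}$ of two consecutive ranks. For such endpoints, $p_l(r(u)-r(v))=r(u)-r(v)+1$ is affine in the ranks, and arcs to anchors depend on a single variable. There are also no ``mixed-layer'' arcs: both endpoints are evaluated at the same time step. So the total cost is a constant plus a sum of per-variable terms. The minimum is therefore attained by choosing each variable's state independently, and a penalty of AND type such as $[\neg x_i\wedge\neg x_j]$ can never arise.

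Your variable gadget is also not binary. The rank sequence $(2,2,2)$ uses no change and is admissible, giving a variable that is ``false'' in layer~1 and ``true'' in layer~3.

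\textbf{The missing idea is a source of nonlinearity.} The paper gets it from the change-point budget of auxiliary vertices, as follows.

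It creates one vertex $c_{j_z}$ per literal occurrence. This vertex is pinned at time~1 to rank~1 (negative literal) or rank~3 (positive literal). It also creates two vertices $v_i,\overline{v}_i$ per variable, pinned to rank~2.

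The arc $(\overline{v}_i,v_i)$ is repeated over $\alpha+1$ layers. This forces $v_i$ up to~3 or $\overline{v}_i$ down to~1, which defines the assignment.

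Over the next $d+1$ layers, there are arcs in both directions between $v_i$ and the negative literal vertices of $x_i$, and between $\overline{v}_i$ and the positive ones. Each such pair costs at least $2$ per layer. This gives a fixed baseline of $12n(d+1)$, using that each variable occurs exactly three times positively and three times negatively. A pair costs one extra per layer unless the false literal vertex spends its single change moving to rank~2 (anchor arcs to the $a_i$/$b_i$ rule out ranks $1$ and~$3$ as alternatives). True literal vertices keep their change.

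In the last layer, the clause arc $(c_{j_1},c_{j_2})$ therefore costs $1$ exactly when both endpoints are stuck at rank~2. Otherwise it costs $0$, since an endpoint that still has its change can move to rank~1 or~3 as needed.

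Your fallback of ``auxiliary per-clause vertices or more layers'' points in this direction. Without an explicit gadget of this kind, plus the cost accounting (baseline, and why every deviation exceeds the budget), the hardness proof is not established.
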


\begin{proof}
 We give a polynomial-time reduction from \textsc{Max 2-SAT} where the input is a~$2$-CNF formula~$\Phi$ and an integer~$d$ and the question is whether there exists an assignment in which at most~$d$ clauses are not satisfied.
 We assume that each variable appears exactly three times positively and three times negatively \cite{BermanK99}.
 Suppose~$\Phi$ has the variable set~$\{x_1, \dots, x_n\}$ and clause set~$\{C_1, \dots, C_m\}$ where~$C_j = (c_{j_1} \lor c_{j_2})$.

 \noindent
 \emph{Construction:} For an intuition and a sketch of the following construction see~\Cref{fig:three_ranks_np_hardness}.
 \begin{figure}[t]
	\centering
	\begin{tikzpicture}[scale = 0.504, transform shape,
		V/.style = {circle, draw, fill=black}
		]
		\node (nl1) at (-13,0.75) {\LARGE{neg. literal}};
		\node (l1) at (-13,3.75) {\LARGE{pos. literal}};
		\node (A1) at (-10,0.75) {\LARGE$a_{i_1}$};
		\node (M1) at (-12, 2.25) {\LARGE$m_{i_2}$};
		\node (V1) at (-10, 2.25) {\LARGE$v_{j}$};
		\node (nV1) at (-8, 2.25) {\LARGE$\overline{v}_j$};
		\node (B1) at (-10, 3.75) {\LARGE$b_{i_3}$};
		\node (T1) at (-10, -1) {\huge $t = 1$};
		\draw[->,thick] (A1) edge (M1);
		\draw[->,thick] (M1) edge (l1);
		\draw[->,thick] (nl1) edge (M1);
		\draw[->,thick] (A1) edge (V1);
		\draw[->,thick] (A1) edge (nV1);
		\draw[->,thick] (M1) edge (B1);
		\draw[->,thick] (V1) edge (B1);
		\draw[->,thick] (nV1) edge (B1);

		\node (V2) at (-4, 3) {\LARGE$v_{j}$};
		\node (nV2) at (-4, 1.5) {\LARGE$\overline{v}_j$};
		\node (A2) at (-4, -1) {\LARGE $t \in [2, \alpha + 2]$};
		\draw[->,thick] (nV2) edge (V2);

		\node (nl3) at (0,0.75) {\LARGE{neg. literal}};
		\node (V3) at (0, 2.25) {\LARGE$v_{j}$};
		\node (B3) at (0, 3.75) {\LARGE$b_{i}$};
		\node (nV3) at (3, 2.25) {\LARGE$\overline{v}_j$};
		\node (l3) at (3,3.75) {\LARGE{pos. literal}};
		\node (A3) at (3, 0.75) {\LARGE$a_{i}$};
		\node (T1) at (1.5, -1) {\LARGE $t \in [\alpha + 3, \alpha + d +  3]$};
		\draw[->,thick, bend left] (nl3) edge (V3);
		\draw[->,thick, bend left] (V3) edge (nl3);
		\draw[->,thick, out = 35, in = -35] (nl3) edge (B3);
		\draw[->,thick, bend left] (l3) edge (nV3);
		\draw[->,thick, bend left] (nV3) edge (l3);
		\draw[->,thick, out = 35, in = -35] (A3) edge (l3);

		\node (A4) at (7,0.75) {\LARGE$a_{i_1}$};
		\node (M4) at (7, 2.25) {\LARGE$m_{i_2}$};
		\node (B4) at (7, 3.75) {\LARGE$b_{i_3}$};
		\node (c41) at (9, 1.5) {\LARGE$c_{j_1}$};
		\node (c42) at (9, 3.) {\LARGE$c_{j_2}$};
		\node (T1) at (8, -1) {\LARGE $t = \alpha + d + 4$};
		\draw[->,thick] (A4) edge (M4);
		\draw[->,thick] (M4) edge (B4);
		\draw[->,thick] (c41) edge (c42);

		\draw[dashed] (-6, 0) -- (-6, 4);
		\draw[dashed] (-2, 0) -- (-2, 4);
		\draw[dashed] (5, 0) -- (5, 4);

	\end{tikzpicture}
	\caption{Given a \textsc{Max 2-SAT} formula where we want to satisfy all but at most~$d$ clauses, the reduction in \Cref{thm:three_ranks} constructs a temporal graph with~$\alpha + d + 4$ time steps.
	The figure sketches the edge sets at different time steps~$t$.
	The first time step fixes a ``starting rank'' for every introduced vertex.
	The time steps~$[2, \alpha + 2]$ secure that either~$v_j$ switches its rank from 2 to 3 or~$\overline{v}_j$ switches from rank 2 to 1.
	This corresponds to an assignment of the variable~$x_j$ where moving~$v_j$ to rank 3 corresponds to setting the variable to true.
	The time steps~$[\alpha + 3, \alpha + d + 3]$ force the negative or positive literals of a variable~$x_j$ to move to rank 2 dependent on the decision made before.
	Finally, the last time step ensures that~$a_i,m_i$ and~$b_i$ do not change their rank at any time step and the arc~$(c_{j_1}, c_{j_2})$ causes agony if and only if both of the literals were moved to rank 2 previously (which means both of them where set to false).}
	\label{fig:three_ranks_np_hardness}
\end{figure}

 Let~$\alpha \coloneqq 12n(d+1) + d$.
 We construct a temporal digraph~$\G=(V,(A_t)_{t\in[\tau]})$ with~$\tau \coloneqq \alpha + d + 4$ time steps.
 The vertex set~$V$ is defined as follows:
 \begin{itemize}
  \item For each variable~$x_i$, we add two vertices~$v_i$ and~$\overline{v}_i$ and refer to them as \emph{variable vertices}.
  \item For each literal~$c_{j_z}$, we add one vertex~$c_{j_z}$ and refer to it as a \emph{negative/positive literal vertex}.
  \item For each~$i \in [\alpha + 1]$, we add three vertices~$a_i, b_i$ and~$m_i$.
 \end{itemize}

 For~$t \in [\tau]$, we define the arc set~$A_t$ as follows:
 \begin{itemize}
  \item $t=1$: For all~$i_1,i_2 \in [\alpha + 1]$, we add the arcs~$(a_{i_1}, m_{i_2})$ and~$(m_{i_1}, b_{i_2})$.
  For every variable~$x_j$ and every~$i \in [\alpha + 1]$, we add the arcs~$(a_i, v_j)$, $(v_j, b_i)$, $(a_i, \overline{v}_j)$ and~$(\overline{v}_j, b_i)$.
  Moreover, for every literal vertex~$c_{j_z}$ and every~$i \in [\alpha + 1]$, we add the arc~$(c_{j_z}, m_i)$ if~$c_{j_z}$ represents a negative literal, and if~$c_{j_z}$ represents a positive literal, then we add~$(m_i, c_{j_z})$.
  \item $t \in [2, \alpha  + 2]$: For every variable~$x_i$, we add the arc~$(\overline{v}_i, v_i)$.
  \item $t \in [\alpha + 3, \alpha + d + 3]$: For every literal vertex~$c_{j_z}$, we add the following arcs.
  If~$c_{j_z}$ represents a positive literal of the variable~$x_i$, then we add the arcs~$(c_{j_z}, \overline{v}_i), (\overline{v}_i,c_{j_z})$ and for every~$i \in [\alpha +1]$ the arc~$(a_i, c_{j_z})$.
  If~$c_{j_z}$ represents a negative literal of the variable~$x_i$, then we add the arcs~$(c_{j_z}, v_i), (v_i,c_{j_z})$ and for every~$i \in [\alpha +1]$ the arc~$(c_{j_z}, b_i)$.
  \item $t = \alpha + d + 4$: For every clause~$(c_{j_1} \lor c_{j_2})$, we add the arc~$(c_{j_1}, c_{j_2})$ and for all~$i_1,i_2 \in [\alpha + 1]$, we add the arcs~$(a_{i_1}, m_{i_2})$ and~$(m_{i_1}, b_{i_2})$.
 \end{itemize}
 
 \emph{Correctness:} We proceed by showing that there is a variable assignment for~$\Phi$ which satisfies at least~$m-d$ clauses if and only if there is a rank 1-segmentation of~$\G$ with temporal agony at most~$\alpha$.

 ``$\Rightarrow:$'' Let~$\beta$ be a variable assignment which satisfies at least~$m-d$ clauses of $\Phi$.
 Consider the following rank 1-segmentation~$r$:
 \begin{itemize}
  \item For all~$i \in [\alpha + 1]$ and~$t\in [\tau]$, we define~$r(a_i,t) \coloneqq 1$, $r(m_i,t) \coloneqq 2$ and~$r(b_i,t) \coloneqq 3$.
  \item If~$\beta(x_i) = 1$, then we define
  \begin{align*}
  r(v_i,t)\coloneqq \begin{cases}
                      2, &t = 1\\
                      3, &t > 1 \\
                    \end{cases}
    \ \text{ and } \
  r(\overline{v}_i,t)\coloneqq 2 \text{ for all }t\in[\tau],
  \end{align*}
  and if~$\beta(x_i) = 0$, then we define
  \begin{align*}
  r(v_i,t)\coloneqq 2 \text{ for all }t\in[\tau]
  \ \text{ and } \
  r(\overline{v}_i,t)\coloneqq \begin{cases}
                      2, &t = 1\\
                      1, &t > 1 \\
                    \end{cases}.
  \end{align*}
  \item For every negative literal vertex~$c_{j_z}$, we define~$r(c_{j_z},1) \coloneqq 1$ and for every positive literal vertex we set~$r(c_{j_z}, 1) \coloneqq 3$.
  If~$c_{j_z}$ evaluates to~0 under~$\beta$, then we set~$r(c_{j_z},t) \coloneqq 2$ for every~$t > 1$.
  If~$c_{j_z}$ evaluates to~1, then we set~$r(c_{j_z}, t) \coloneqq r(c_{j_z},1)$ for~$t \in \{2,\dots,\tau-1\}$ and define~$r(c_{j_z}, \tau)$ such that $r(c_{j_1},\tau)<r(c_{j_2},\tau)$.
  Note that this is possible.
  If both literals are true, then we can set~$r(c_{j_1},\tau)\coloneqq 1$ and~$r(c_{j_2},\tau)\coloneqq 2$.
  If one literal of the clause is false, then it has rank 2 and consequently we can set the rank of the literal vertex corresponding to the true literal of the clause to~$1$ or~$3$ depending on whether it is a source or sink in~$(V, A_\tau)$.
 \end{itemize}
 First, note that all vertices change their rank at most once.
 It remains to check that~$q(r,\G,p_l)\le \alpha$.
 For this, observe that in the first~$\alpha + 2$ time steps all arcs point from a smaller rank to a larger rank.
 In the time interval~$[\alpha + 3, \alpha + d + 3]$, the only arcs that point from a larger rank to a smaller rank are arcs between variable vertices and corresponding literal vertices.
 More precisely, each vertex~$v_i$ has an outgoing arc to the three negative literal vertices of the variable~$x_i$.
 By construction, these literal vertices are one rank below~$v_i$ which yields a penalty of~6.
 Similarly, the vertex~$\overline{v}_i$ has three incoming arcs from the positive literal vertices which are ranked one rank above~$\overline{v_i}$ and also yield a penalty of~$6$.
 Since the time interval has~$d+1$ time steps and there are~$n$ vertices, the penalties sum up to~$12n(d+1)$.
 Finally, note that in the last time step only the arcs corresponding to unsatisfied clauses yield a penalty of~1 since the two corresponding literal vertices have rank~$2$.
 Hence, the temporal agony is at most~$12n(d+1) + d = \alpha$.

  ``$\Leftarrow:$'' Let~$r$ be a rank 1-segmentation with~$q(r,\G,p_l)\le\alpha$.
  We start with some basic observation about~$r$.
  For every~$i \in [\alpha + 1]$, each vertex~$a_i,m_i$ and~$b_i$ is contained in at least~$\alpha + 1$ directed paths of length two in the first and last time step.
  In particular,~$a_i$ is always a start vertex, $b_i$ is always an end vertex and~$m_i$ is always in the middle.
  Hence, we conclude~$r(a_i, t) = 1, r(m_i,t) = 2$ and~$r(b_i,t) = 3$ for all~$t \in [\tau]$, as otherwise the agony would already be larger than~$\alpha$.
  Next, we consider the different time steps and discuss the behavior of~$r$.

  $t = 1$: Each variable vertex has an incoming arc from every~$a_i$ and and outgoing arc to every~$b_i$.
  Since there are~$\alpha +1$ many~$a_i$'s and~$b_i$'s and they have rank 1 and 3 respectively, we conclude that~$r(v_i , 1) = r(\overline{v}_i,1) = 2$.
  If~$c_{j_z}$ is a negative literal vertex, then~$(c_{j_z},m_i) \in A_1$ for all~$i \in [\alpha +1]$ and since $r(m_i,1)=2$, this forces~$r(c_{j_z},1) = 1$.
  By the same argument, we derive~$r(c_{j_z},1) = 3$ for every positive literal vertex~$c_{j_z}$.

  $t \in [2, \alpha  + 2]$: The only arcs in~$A_t$ are of the form~$(\overline{v}_i,v_i)$, so we can assume that theses are the only vertices which possibly change their rank in this time interval.
  Since the arc~$(\overline{v}_i,v_i)$ appears in~$\alpha + 1$ consecutive time steps and~$r(v_i , 1) = r(\overline{v}_i,1) = 2$, it is clear that~$v_i$ or~$\overline{v}_i$ has to change its rank.
  We can assume that this rank change is at time step 2 and that only one of them changes its rank, that is, either~$r(v_i,2)=3$ or~$r(\overline{v_i},2)=1$.
  This allows us to define a variable assignment~$\beta$ as follows:
  \begin{align*}
  \beta(x_i)\coloneqq \begin{cases}
                      1, & r(v_i, 2) = 3\\
                      0, & r(\overline{v}_i, 2) = 1. \\
                    \end{cases}
  \end{align*}

  $t \in [\alpha + 3, \alpha + d + 3]$:
  First, note that in~$A_t$ we have arcs in both directions between~$v_i$ and the three negative literal vertices of the variable~$x_i$ and arcs in both directions between~$\overline{v}_i$ and the three positive literal vertices.
  So no matter how the ranking of these vertices is in this time interval one such a literal-variable pair of vertices incurs a penalty of at least~$2$ in each of the~$d+1$ time steps.
  Since we have six such pairs for every variable, this yields a penalty of at least~$12n(d+1)$.
  As observed above, for each variable~$x_i$, we either have~$r(v_i, \alpha + 2) = 3$ or~$r(\overline{v}_i, \alpha + 2) = 1$.
  If~$r(v_i, \alpha + 2) = 3$, then also~$r(v_i, t) = 3$ for all~$t \in [\alpha + 3, \alpha + d + 3]$ since~$v_i$ cannot change its rank anymore.
  Consequently, every negative literal vertex~$c_{j_z}$ (which is assumed to have rank 1 at time step~$\alpha + 2$) is forced to change its rank to~$2$ or~$3$ in this time interval as otherwise the arc~$(v_i, c_{j_z})$ yields penalty~$3$ (instead of~$2$) in~$d+1$ time steps and the temporal agony would be at least~$12n(d+1) + d+1 > \alpha$.
  Since every negative literal vertex has an outgoing arc to all~$\alpha + 1$ vertices~$b_i$ (which have rank 3), it follows that the negative literal vertices change their rank to~$2$ (and not to rank $3$).
  A similar consequence holds for the positive literal vertices if~$r(\overline{v}_i, \alpha + 2) = 1$, that is, all positive literal vertices of the variable~$x_i$ are forced to change their rank from~$3$ to~$2$ in this time interval.

  $t=\tau$: Since~$q(r,\G,p_l)\le\alpha$ and there is already a penalty of~$\alpha - d$ before the last time step, we know that time step~$\tau$ causes at most~$d$ penalty.
  By the observations above, we know that if a clause~$(c_{j_1} \lor c_{j_2})$ is not satisfied, then the corresponding literal vertices have rank~$2$ at time step~$\tau$ and cause a penalty of~$1$.
  Hence, at most~$d$ clauses are not satisfied by~$\beta$.
\end{proof}

We point out that the lifetime of the constructed temporal digraph in \Cref{thm:three_ranks} can be made constant by using arc weights (that is, reducing to \SEG instead of \USEG).
The main question that remains is whether \USEG is in FPT with respect to~$\alpha$.

\section{Four Ranks}
\label{sec:four_ranks}

In this section, we show that \textsc{(Unweighted)} \SEG with $k=4$ ranks is strictly harder than~$k=3$ ranks as the former one is already NP-hard for zero agony and a constant lifetime.
In fact, this even holds if the underlying digraph is a DAG.
The reduction is from a \textsc{3-SAT} variant.

\begin{restatable}[\appref{thm:four_ranks}]{theorem}{fourRanks}
	\label{thm:four_ranks}
	\USEG with $k=4$, $\ell =1$, $\alpha = 0$ and $\tau=11$ is NP-hard even if the underlying digraph is a DAG.
\end{restatable}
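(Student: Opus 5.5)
We reduce from a restricted \textsc{3-SAT} variant, for instance \textsc{Monotone 3-SAT} or a bounded-occurrence version in which every variable occurs a constant number of times. With $\alpha=0$, every arc $(u,v)\in A_t$ imposes the hard constraint $r(u,t)<r(v,t)$. The whole construction therefore works with strict inequalities on four ranks and at most one change per vertex.

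The first step is to build \emph{rank anchors}. We put a directed path $d_1\to d_2\to d_3\to d_4$ into several layers (say the first and last ones). With zero agony and four ranks this forces $r(d_i,t)=i$ in those layers. Since $d_i$ has the same rank at the two ends, the segmentation forces $r(d_i,t)=i$ throughout. Using arcs to and from the $d_i$ we can then pin any vertex to a prescribed rank, or a prescribed set of ranks, in any single layer. We must keep the underlying digraph acyclic, so all anchoring arcs are directed consistently with the global order $d_1<d_2<d_3<d_4$.

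The second step is the variable gadgets, in the spirit of \Cref{thm:three_ranks}. For each variable $x$ we use a vertex pinned to one rank early and to a different rank late, so it must change exactly once. Intermediate layers constrain where the change can happen so that only two effective choices remain: change early (true) or change late (false). Equivalently, a pair of vertices joined by an arc in middle layers forces one of them to move, as with $v_j,\overline{v}_j$ before. The extra fourth rank is what lets us propagate this choice to literal vertices without paying agony. In a layer where the variable vertex sits at rank $2$ or $3$, an arc from it to a literal vertex pins the literal vertex to ranks above it, and this forces the literal's single change to be spent or kept.

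The third step is the clause gadgets. For a clause $(\ell_1\vee\ell_2\vee\ell_3)$ we add, in a final layer, a structure such as a directed path $\ell_1\to\ell_2\to\ell_3$ together with anchor arcs. Its ranks can be chosen within $[4]$ with strict increase exactly when at least one literal vertex is still ``free'', meaning it did not use its change to reach a blocking rank. Four ranks suffice for a path of three vertices plus slack, but not if all three are stuck at the same rank. With $\alpha=0$ everything is a hard constraint, so correctness reduces to two checks: a satisfying assignment yields an explicit segmentation, and conversely the anchors force the intended ranks, from which we read off an assignment. The lifetime is a fixed number of phases: anchoring, variable choice, propagation to literals, and clause checking, with a few separator layers. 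This totals $11$.

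The main obstacle is the third step together with the DAG requirement. In \Cref{thm:three_ranks} we used 2-cycles between literal and variable vertices, and those cannot be used here. Propagation must instead use only consistently oriented arcs, so that the global order is acyclic while still giving ``exactly one change'' semantics. I would first try a fixed topological order: anchors, then variable vertices, then literal vertices, then clause structure. I would check case by case that each gadget is satisfiable in $[4]$ precisely under the intended conditions.
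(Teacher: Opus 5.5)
Your anchor step is correct and matches the paper. The clause gadget, however, does not work, and it is where the hardness has to come from. The variable gadget, the propagation to literals and the count of $11$ layers are also left too vague to check.

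A path $\ell_1\to\ell_2\to\ell_3$ in one final layer does not express $\ell_1\vee\ell_2\vee\ell_3$. If $\ell_1$ and $\ell_2$ are false and stuck at the same blocking rank, the arc $(\ell_1,\ell_2)$ already causes agony even though $\ell_3$ is true. No other choice of arcs inside a single layer fixes this. With $\alpha=0$, a layer only imposes constraints $r(u)\le r(v)-1$ and $1\le r\le 4$. The solution set of such a system is closed under coordinatewise $\min$ and $\max$, and hence under the coordinatewise median. Suppose false literals are pinned to ranks $f_1,f_2,f_3$, and each pattern with exactly one true literal has a zero-agony witness. Then the median of these three witnesses equals $f_i$ on every literal coordinate, so the all-false pattern passes as well. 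Hence no test confined to one layer separates the all-false pattern from the satisfied ones; the at-most-one-change restriction must be part of the clause check itself.

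That is the missing idea, and the paper realizes it by spreading each clause over four consecutive layers $i,\dots,i+3$:
\begin{itemize}
\item the arcs are $(d_1,\ell_{j_1})$; then $(\ell_{j_1},\ell_{j_2})$; then $(\ell_{j_1},\ell_{j_3})$ and $(\ell_{j_3},d_4)$; then $(\ell_{j_2},\ell_{j_3})$;
\item if all three literal vertices have rank $1$ at time $i-1$, each must spend its single change: $\ell_{j_1}$ reaches rank at least $2$ at time $i$, $\ell_{j_2}$ at least $3$ at time $i+1$, and $\ell_{j_3}$ exactly $3$ at time $i+2$, so $(\ell_{j_2},\ell_{j_3})$ is violated at time $i+3$;
\item if any one literal vertex may choose its rank in $\{1,2,3\}$ at time $i-1$, explicit schedules avoid all agony.
\end{itemize}

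So ``false'' should mean ``forced to rank $1$ just before the gadget'', not ``change already spent''. The variable side is then simple. The arcs $(d_1,v_i),(v_i,d_3)\in A_6$ force $r(v_i,6)=2$. At every other time $v_i$ is a sink, so it can sit at rank $4$ either before or after time $6$. Each literal vertex has an arc into its variable vertex in $A_1$ (negative literals) or $A_7$ (positive literals). Thus a false literal is pinned to rank $1$ exactly when its variable vertex has rank $2$ at that time.

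Monotonicity of the \textsc{3-SAT} instance puts all negative gadgets in layers $2$--$5$ and all positive ones in $8$--$11$, which is where $\tau=11$ comes from. All arcs run $d_1\to$ literal vertices $\to$ variable vertices $\to d_3$, and $\ell_{j_1}\to\ell_{j_2}\to\ell_{j_3}\to d_4$ inside a clause, so the underlying digraph is a DAG.
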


\appendixproof{thm:four_ranks}
{\fourRanks*
 We give a polynomial-time reduction from \textsc{Monotone 3-SAT} where the input is a~$3$-CNF formula~$\Phi$ consisting of monotone clauses \cite{DarmannD21}.
 Here, a clause is called monotone if it contains only negative or only positive literals.
 Suppose~$\Phi$ has the variable set~$\{x_1, \dots, x_n\}$ and the clause set~$\{C_1, \dots, C_m\}$ where the literals in every clause are enumerated, that is, $C_j = (\ell_{j_1} \lor \ell_{j_2} \lor \ell_{j_3})$.

 \emph{Construction:} We construct a temporal digraph~$\G=(V,(A_t)_{t\in [11]})$ with $\tau=11$ time steps (see \Cref{fig:four_ranks_main}).
 The vertex set~$V$ is defined as follows:
 \begin{itemize}
  \item We add four vertices~$d_1,d_2,d_3$ and $d_4$, which will be ``anchors'' in the ranks $1,2,3$ and~$4$ respectively.
  \item For each variable~$x_i$, we add one vertex~$v_i$ and refer to it as a \emph{variable vertex}.
  \item For each literal~$\ell_{j_i}$, we add one vertex~$\ell_{j_i}$ to~$V$ and refer to it as a \emph{(negative/positive) literal vertex}.
 \end{itemize}

 It remains to define the arc sets.
 First, we add the arcs~$(d_1,d_2), (d_2, d_3)$ and~$(d_3,d_4)$ to~$A_1$ and~$A_{11}$.
 This ensures that in any rank 1-segmentation with agony~$0$ the vertices~$d_1,d_2,d_3$ and $d_4$ have rank~$1,2,3$ and~$4$ respectively in every time step.
 For every literal~$\ell_{j_i}$, we add an outgoing arc to its corresponding variable as follows:
 If it is a negative literal, then we add the arc to~$A_1$ and if it is positive to~$A_7$.
 Moreover, we add the arcs~$(d_1,v_i)$ and the arc~$(v_i, d_3)$ for each variable~$x_i$ to~$A_6$.
 Thus, the variable vertex~$v_i$ needs to have rank~$2$ at time step~6.
 This yields the binary choice for each variable to choose whether its vertex switches the rank before or after time step~6.
 \begin{figure}[t]
	\centering
	\begin{tikzpicture}[scale = 0.4412, transform shape,
		V/.style = {circle, draw, fill=black}
		]
		\node (negLit) at (-11.5, 1.5) {\LARGE neg. lit.};
		\node (vi1) at (-11.5, 3) {\LARGE$v_i$};
		\node (A1) at (-9.5,0) {\LARGE$d_1$};
		\node (B1) at (-9.5, 1.5) {\LARGE$d_2$};
		\node (C1) at (-9.5,3) {\LARGE$d_3$};
		\node (D1) at (-9.5,4.5) {\LARGE$d_4$};
		\node (T1) at (-10.5, -1.5) {\huge $A_1$};
		\draw[->,thick] (negLit) edge (vi1);
		\draw[->,thick] (A1) edge (B1);
		\draw[->,thick] (B1) edge (C1);
		\draw[->,thick] (C1) edge (D1);

		\node (vi6) at (-0.5, 1.5) {\LARGE$v_i$};
		\node (C6) at (-0.5,3) {\LARGE$d_3$};
		\node(D6) at (-0.5,0) {\LARGE$d_1$};
		\node (T6) at (-0.5, -1.5) {\huge $A_6$};
		\draw[->,thick] (vi6) edge (C6);
		\draw[->,thick] (D6) edge (vi6);

		\node (posLit) at (4.5, 1.5) {\LARGE pos. lit.};
		\node (vi7) at (4.5, 3) {\LARGE$v_i$};
		\node (T7) at (4.5, -1.5) {\huge $A_7$};
		\draw[->,thick] (posLit) edge (vi7);

		\draw[dashed] (-8, -1) -- (-8, 5.5);\node (negClauses) at (-5.5, 2.25)
        [text width=4cm, align=center]
        {\LARGE\textbf{ negative\\ clause gadgets}};
		\draw[dashed] (-3, -1) -- (-3, 5.5);
		\draw[dashed] (2, -1) -- (2, 5.5);
		\draw[dashed] (7, -1) -- (7, 5.5);\node (posClauses) at (9.5, 2.25)
        [text width=4cm, align=center]
        {\LARGE\textbf{ positive\\ clause gadgets}};
		\node (A11) at (14.5,0) {\LARGE$d_1$};
		\node (B11) at (14.5,1.5) {\LARGE$d_2$};
		\node (C11) at (14.5,3) {\LARGE$d_3$};
		\node (D11) at (14.5,4.5) {\LARGE$d_4$};
		\node (T11) at (14.5, -1.5) {\huge $A_{11}$};
		\draw[->,thick] (A11) edge (B11);
		\draw[->,thick] (B11) edge (C11);
		\draw[->,thick] (C11) edge (D11);

	\end{tikzpicture}
	\caption{A sketch of the reduction from \textsc{Monotone 3-SAT} in \Cref{thm:four_ranks}.
	The constructed temporal digraph consists of eleven time steps and time step~6 forces vertex~$v_i$ into rank~$2$.
	The binary choice of the corresponding Boolean variable is encoded by the choice of whether~$v_i$ has rank~$2$ in the time interval~$[1,6]$ or in the interval~$[6,11]$.
	The negative clause gadgets lie in the time interval~$[2,5]$ and the positive clause gadgets in the time interval~$[8,11]$.}
	\label{fig:four_ranks_main}
\end{figure}

 To encode the negative clauses in the time interval~$\{2,3,4,5\}$ and the positive clauses in the time interval~$\{8,9,10,11\}$, we construct a clause gadget as follows (see \Cref{fig:four_ranks_clause_gadgets}):
 For a clause~$C_j = (\ell_{j_1} \lor \ell_{j_2} \lor \ell_{j_3})$, %
we say that we ``plant'' the clause gadget at time step~$i$ by adding the following arcs:
 \begin{itemize}
  \item $A_i$: add~$(d_1,\ell_{j_1})$,
  \item $A_{i+1}$: add~$(\ell_{j_1}, \ell_{j_2})$,
  \item $A_{i+2}$: add~$(\ell_{j_1}, \ell_{j_3})$ and~$(\ell_{j_3}, d_4)$,
  \item $A_{i+3}$: add~$(\ell_{j_2}, \ell_{j_3})$.
 \end{itemize}

 As indicated above, if the clause is negative, then the respective gadget is planted at time step~$2$ and otherwise at time step~$8$.
 This concludes our construction. %
 Note that the underlying digraph is indeed a DAG (see \Cref{fig:four_ranks_main} for a sketch).
\begin{figure}[t]
	\centering
	\begin{tikzpicture}[scale = 0.57, transform shape,
		V/.style = {circle, draw, fill=black}
		]
		\node (A1) at (-10,0.75) {\LARGE$d_1$};
		\node (B1) at (-10, 2.25) {\LARGE$\ell_{j_1}$};
		\node (T1) at (-10, -1) {\huge $A_i$};
		\draw[->,thick] (A1) edge (B1);

		\node (A2) at (-6,0.75) {\LARGE$\ell_{j_1}$};
		\node(B2) at (-6,2.25) {\LARGE$\ell_{j_2}$};
		\node (T2) at (-6, -1) {\huge $A_{i+1}$};
		\draw[->,thick] (A2) edge (B2);

		\node (A3) at (-2,0.75) {\LARGE$\ell_{j_1}$};
		\node(B3) at (-2,2.25) {\LARGE$\ell_{j_3}$};
		\node(C3) at (-2,3.75) {\LARGE$d_4$};
		\node (T3) at (-2, -1) {\huge $A_{i+2}$};
		\draw[->,thick] (A3) edge (B3);
		\draw[->,thick] (B3) edge (C3);

		\node (A4) at (2,0.75) {\LARGE$\ell_{j_2}$};
		\node(B4) at (2,2.25) {\LARGE$\ell_{j_3}$};
		\node (T4) at (2, -1) {\huge $A_{i+3}$};
		\draw[->,thick] (A4) edge (B4);

		\draw[dashed] (-8, 0) -- (-8, 4);
		\draw[dashed] (-4, 0) -- (-4, 4);
		\draw[dashed] (0, 0) -- (0, 4);
		\draw[thick] (5.5, -1) -- (5.5, 5);

		\node (d1) at (9,-1) {\large$d_1$};
		\node(l1) at (9,0) {\large$\ell_{j_1}$};
		\node(l2) at (9,1) {\large$\ell_{j_2}$};
		\node (l3) at (9,2) {\large $\ell_{j_3}$};
		\node(vi) at (9,3) {\large$v_i$};
		\node(d2) at (10.5,1.5) {\large$d_2$};
		\node (d3) at (9,4) {\large $d_3$};
		\node(d4) at (9,5) {\large$d_4$};
		\draw[->,thick] (d1) edge (l1);
		\draw[->,thick] (l1) edge (l2);
		\draw[->,thick] (l2) edge (l3);
		\draw[->,thick] (l3) edge (vi);
		\draw[->,thick] (vi) edge (d3);
		\draw[->,thick] (d1) edge (d2);
		\draw[->,thick] (d2) edge (d3);
		\draw[->,thick] (d3) edge (d4);

	\end{tikzpicture}
	\caption{On the left is a sketch of a clause gadget from the reduction in \Cref{thm:four_ranks}.
	Recall that~$d_1$ and~$d_4$ are fixed at rank~$1$ and~$4$ respectively.
	If~$\ell_{j_1}, \ell_{j_2}$ and~$\ell_{j_3}$ have rank~$1$ at time step~$i-1$, then this gadget causes agony.
	If at least one of three literal vertices can choose its rank freely from~$\{1,2,3\}$ at time step~$i-1$, then agony can be avoided.\\
	On the right is a rough sketch of the underlying DAG of $\mathcal{G}$ where arcs, that lie in the transitive closure, are not shown.}
	\label{fig:four_ranks_clause_gadgets}
\end{figure}

 \emph{Correctness:} %
 ``$\Rightarrow$'' Let~$\beta$ be a satisfying assignment for~$\Phi$.
 Consider the following rank 1-segmentation~$r$:
 \begin{itemize}
  \item For all~$t\in [11]$, we define~$r(d_1,t) \coloneqq 1, r(d_2,t) \coloneqq 2, r(d_3,t) \coloneqq 3$ and~$r(d_4,t) \coloneqq 4$.
  \item If~$\beta(x_i) = 1$, then we define
  \begin{align*}
  r(v_i,t)\coloneqq \begin{cases}
                      2, &t \leq 6\\
                      4, &t > 6
                    \end{cases}
  \end{align*}
  and if~$\beta(x_i) = 0$, then we define
  \begin{align*}
  r(v_i,t)\coloneqq \begin{cases}
                      4, &t < 6\\
                      2, &t \geq 6.
                    \end{cases}
  \end{align*}
 \end{itemize}

 It remains to define the ranks of the literal vertices.
 Let~$\ell_{j_i}$ be a negative literal of variable~$x_h$.
 If it is set to false by~$\beta$ (that is, $\beta(x_h)=1$), then~$r(v_h,1)=2$ forces~$r(\ell_{j_i}, 1) = 1$.
If it is set to true by~$\beta$ (that is, $\beta(x_h)=0$), then~$r(v_h,1)=4$ and hence $r(\ell_{j_i}, 1)$ can be freely chosen from $\{1,2,3\}$.
The same observation holds for positive literal vertices at time step~$7$.

Now, we claim that agony can be avoided in a clause gadget planted at time step~$i>1$ if at least one of the literal vertices can choose its rank from~$\{1,2,3\}$ at time step~$i-1$
  The proof is via case distinction:%
  \begin{itemize}
   \item If~$\ell_{j_1}$ can choose, then we define
   \begin{align*}
    r(\ell_{j_1},t)&\coloneqq
    \begin{cases}
    2, t \leq i\\
    1, t > i
    \end{cases}, \quad
    r(\ell_{j_2},t)\coloneqq
    \begin{cases}
    1, t \leq i\\
    2, t > i
    \end{cases}, \\
    r(\ell_{j_3},t)&\coloneqq
    \begin{cases}
    1, t \leq i+1\\
    3, t > i+1
    \end{cases}.
   \end{align*}
   \item If~$\ell_{j_2}$ can choose, then we define
   \begin{align*}
    r(\ell_{j_1},t)&\coloneqq
    \begin{cases}
    1, t \leq i-1\\
    2, t > i-1
    \end{cases}, \quad
    r(\ell_{j_2},t)\coloneqq
    \begin{cases}
    3, t \leq i+2\\
    2, t > i+2
    \end{cases}, \\
    r(\ell_{j_3},t)&\coloneqq
    \begin{cases}
    1, t \leq i+1\\
    3, t > i+1
    \end{cases}.
   \end{align*}
   \item If~$\ell_{j_3}$ can choose, then we define
   \begin{align*}
    r(\ell_{j_1},t)&\coloneqq
    \begin{cases}
    1, t \leq i-1\\
    2, t > i-1
    \end{cases}, \quad
    r(\ell_{j_2},t)\coloneqq
    \begin{cases}
    1, t \leq i\\
    3, t > i
    \end{cases}, \\
    r(\ell_{j_3},t) &\coloneqq
    \begin{cases}
    3, t \leq i+2\\
    4, t > i+2
    \end{cases}.
   \end{align*}
  \end{itemize}
  It is easily verified that none of the three cases above produces agony (by construction of a clause gadget).

 Since~$\beta$ is a satisfying assignment, we know that in each clause at least one literal is set to true. Hence, for every clause gadget there is a literal with a choice before the gadget and consequently there is a rank 1-segmentation~$r$ with~$q(r,\G,p_l)=0$.

 ``$\Leftarrow$'' Let~$r$ be a rank 1-segmentation with~$q(r,\G,p_l) = 0$.
 Clearly, $r(d_i,t)=i$ for all~$i\in[4]$ and~$t\in[11]$.
 This implies~$r(v_i, 6)= 2$ for each variable vertex~$v_i$ since~$(d_1, v_i), (v_i,d_3) \in A_6$.
 Since~$v_i$ is a sink in every other time step than~$6$, we can assume either~$r(v_i, 5)= 4$ or~$r(v_i, 7)= 4$.
 This allows us to define a variable assignment~$\beta$ as follows:
 \begin{align*}
  \beta(x_i)\coloneqq
  \begin{cases}
  1, & r(v_i, 5) = 2\\
  0, & r(v_i, 7) = 2\\
  \end{cases}.
 \end{align*}
 It remains to show that~$\beta$ is a satisfying assignment of~$\Phi$.
 Suppose it is not and there is a negative clause~$(\ell_{j_1} \lor \ell_{j_2} \lor \ell_{j_3})$ in which all literals are set to false by~$\beta$.
 Then all corresponding variable vertices have rank~$2$ at time step~$1$, which implies~$r(\ell_{j_1},1) = r(\ell_{j_2},1) = r(\ell_{j_3},1) = 1$.
 By construction, this produces agony in the clause gadget which is a contradiction.
  This can be seen as follows: Since~$d_1$ has rank~$1$ it is clear that~$\ell_{j_1}$ has to switch from rank~$1$ to rank at least~$2$ at time step~2.
  Subsequently, $\ell_{j_2}$ has to switch from~$1$ to at least~$3$ at time step~3 and $\ell_{j_3}$ has to switch from~$1$ to~$3$ at time step~$4$ since~$d_4$ has rank~$4$.
  Thus, at time step~5, the arc~$(\ell_{j_2}, \ell_{j_3})$ produces agony since the rank of~$\ell_{j_2}$ is at least the rank of~$\ell_{j_3}$.

  For positive clauses, the argument is the same with time step~$7$ instead of~$1$.\qed
}

Note that for~$k=3$ the constructed instance in the reduction of \Cref{thm:three_ranks} has directed cycles in the underlying digraph. It is unclear whether the case~$k=3$ is also NP-hard if the underlying digraph is a DAG.

\section{Conclusion}

We studied the parameterized complexity of \SEG with a focus on the parameter number of ranks.
In particular, we showed that there are jumps in the complexity when going from two to three and from three to four ranks.
On an intuitive level it seems like the restriction to~$k$ ranks and zero agony corresponds to $(k-1)$\textsc{-SAT}.
With zero agony, two ranks are straightforward, three ranks can be encoded as a \textsc{2-SAT}-instance and four ranks are NP-hard by a reduction from a \textsc{3-SAT} variant.
From this perspective together with the NP-hardness for three ranks by reduction from \textsc{Max 2-SAT}, it seems like minimizing temporal agony corresponds to minimizing the number of unsatisfied clauses.
Since \textsc{Max 2-SAT} is known to be in FPT for this parameter~\cite{RazgonO09}, we leave it as the main open question whether \USEG with three ranks is also in FPT for~$\alpha$? (Note that we do not even know whether parameterizing by~$\alpha+\tau$ yields FPT.)
Here, stronger tools than a reduction to \textsc{Max 2-SAT} might be needed (e.g.\ using \textsc{Chain-SAT} \cite{KimKPW25}).
Besides this question, other interesting research directions are the following:
\begin{compactitem}
 \item How does the complexity of \SEG with $k=3$ change if we allow more than one change point per vertex (that is, ~$\ell > 1$)?
 \item Is the problem in FPT with respect to~$n$? (This is open even for $k=3$.)
 \item Are there similar results regarding the number of ranks if we use a global budget for the total number of vertex rank changes?
\end{compactitem}

\bibliographystyle{plainnat}
\bibliography{ref}

\newpage

\appendix

\section{Missing Proof Details}

\appendixProofs

\end{document}